\newtheorem{thm}{Theorem}
\newtheorem{lem}[thm]{Lemma}
\def\pkg#1{\texttt{\textbf{#1}}}
\def\code#1{\texttt{#1}}
\def\proglang#1{\textbf{#1}}
\newcommand{\utwi}[1]{\mbox{\boldmath $ #1$}}
\newcommand{\Y}{{\utwi{Y}}}
\newcommand{\X}{{\utwi{X}}}
\providecommand{\e}[1]{\ensuremath{\times 10^{#1}}}
\begin{document}

\title{\pkg{ecp}: An \proglang{R} Package for Nonparametric Multiple Change Point Analysis of Multivariate Data}
\author{Nicholas A. James and David S. Matteson\\ Cornell University\footnote{
James is a PhD Candidate, 
School of Operations Research and Information Engineering,
Cornell University,
206 Rhodes Hall,
Ithaca, NY 14853
(Email: \href{mailto:nj89@cornell.edu}{nj89@cornell.edu}; Web: \url{https://courses.cit.cornell.edu/nj89/}).
Matteson is an Assistant Professor, 
Department of Statistical Science,
Cornell University,
1196 Comstock Hall,
Ithaca, NY 14853
(Email: \href{mailto:matteson@cornell.edu}{matteson@cornell.edu}; Web: \url{http://www.stat.cornell.edu/\~matteson/}).}
}
\date{}
\maketitle

\begin{abstract}
There are many different ways in which change point analysis can be performed, from purely parametric methods to those that are distribution 
free. The \pkg{ecp} package is designed to perform multiple change point analysis while making as few assumptions as possible. While many other 
change point methods are applicable only for univariate data, this \proglang{R} package is suitable for both univariate and multivariate observations. Hierarchical estimation 
can be based upon either a divisive or agglomerative algorithm. Divisive estimation sequentially identifies change points via a 
bisection algorithm. The agglomerative algorithm estimates change point locations by determining an optimal segmentation. Both approaches 
are able to detect \emph{any} type of distributional change within the data. This provides an advantage over many existing change point algorithms which 
are only able to detect changes within the marginal distributions.
\end{abstract}

\par\vfill\noindent
{\bf KEY WORDS:}
Cluster analysis;
Multivariate time series;
Signal processing.
\par\medskip\noindent
{\bf Short title: \pkg{ecp}: An R Package for Nonparametric Multiple Change Point Analysis}

\clearpage\pagebreak\newpage
\newlength{\gnat}
\setlength{\gnat}{24pt} 
\baselineskip=\gnat

\section{Introduction}\label{introduction}
Change point analysis is the process of detecting distributional changes within time-ordered observations. This arises in financial modeling 
\Citep{Talih:2005}, where correlated assets are traded and models are based on historical data. It is applied in bioinformatics \Citep{Muggeo:2011} to 
identify genes that are associated with specific cancers and other diseases. Change point analysis is also used to detect credit card fraud 
\Citep{Bolton:2002} and other anomalies \Citep{Akoglu:2010,Sequeira:2002}; and for data classification in data mining \Citep{Mampaey:2011}.\par

We introduce the \pkg{ecp} \proglang R package for multiple 
change point analysis of multivariate time series \citep{James:2012}. The \pkg{ecp} package provides methods for change point analysis 
that are able to detect \emph{any} type of distributional change within a time series. Determination of the number of change points is also addressed 
by these methods as they estimate both the number and locations of change points simultaneously. The only assumptions placed on distributions are that the absolute 
$\alpha$th moment exists, for some $\alpha\in (0,2]$, and that observations are independent over time. Distributional changes are identified by making use of the 
energy statistic of \cite{Rizzo:2005, Rizzo:2010}.\par

There are a number of freely available \proglang R packages that can be used to perform change point analysis, each making its own assumptions 
about the observed time series. For instance, the \pkg{changepoint} package \citep{Killick:2011} provides many methods for performing change 
point analysis of univariate time series. Although the package only considers the case of independent observations, the theory behind the implemented methods 
allows for certain types of serial dependence \citep{Killick:2012}. For specific methods, the expected computational cost can 
be shown to be linear with respect to the length of the time series. Currently, the
\pkg{changepoint} package is only suitable for finding changes in mean or variance. This package also estimates multiple change points through the 
use of penalization. The drawback to this approach is that it requires a user specified penalty term.\par

The \pkg{cpm} package \citep{Ross:2012} similarly provides a variety of methods for performing change point analysis 
of univariate time series. These methods range from those to detect changes in independent Gaussian data to fully nonparametric methods 
that can detect general distributional changes. Although this package provides methods to perform analysis of univariate time series with arbitrary 
distributions, these methods cannot be easily extended to detect changes in the full joint distribution of multivariate data.\par

Unlike the \pkg{changepoint} and \pkg{cpm} packages, the \pkg{bcp} package \citep{Emerson:2007} is designed to perform Bayesian single change 
point analysis of univariate time series. It returns the posterior probability of a change point occurring 
at each time index in the series. Recent versions of this package have reduced the computational cost from quadratic to linear with respect 
to the length of the series. However, all versions of this package are only designed to detect changes in the mean of independent Gaussian 
observations.\par

The \pkg{strucchange} package \citep{strucchangepkg} provides a suite of tools for detecting changes within linear regression models. Many of these tools however, 
focus on detecting at most one change within the regression model. This package also contains methods that perform online change detection, thus allowing it to be 
used in settings where there are multiple changes. Additionally, if the number of changes is known \emph{a priori} then the \code{breakpoints} method \citep{breakpoints} 
can be used to perform retrospective analysis. For a given number of changes, this method returns the change point estimates which minimize the residual sum of squares.\par

In Section~\ref{energy-statistic} we introduce the energy statistic of \cite{Rizzo:2005,
Rizzo:2010}, which is the fundamental divergence measure applied for change point analysis. Sections~\ref{e-divisive} and~\ref{e-agglomerative} 
briefly outline the package's methods. Section~\ref{Examples} provide examples of these methods being applied to simulated data, while 
Section~\ref{RealData} presents applications to real datasets. In the Appendix we include 
an outline of the algorithms used by this package's methods. Finally, the \pkg{ecp} package can be freely obtained at \url{http://cran.r-project.org/web/packages/ecp/}.\par

\section{The ecp package}\label{energy-statistic}
The \pkg{ecp} package is designed to address many of the limitations of the currently available change point packages. It  
is able to perform multiple change point analysis for both univariate and multivariate time series. The methods are able to estimate 
multiple change point locations, without {\it a priori} knowledge of the number of change points. The procedures assume that observations are 
independent with finite $\alpha$th absolute moments, for some $\alpha\in(0,2]$.

\subsection{Measuring differences in multivariate distributions}\label{population-energy}
\cite{Rizzo:2005, Rizzo:2010} introduce a divergence measure that can determine whether two independent random vectors are identically distributed. 
Suppose that $X,Y\in\mathbb R^d$ are such that, $X\sim F$ and $Y\sim G$, with characteristic functions $\phi_x(t)$ and $\phi_y(t)$, respectively. A 
divergence measure between the two distributions may be defined as
$$\int_{\mathbb R^d}|\phi_x(t)-\phi_y(t)|^2\;w(t)\,dt,$$
in which $w(t)$ is any positive weight function, for which the above integral is defined. Following \cite{James:2012} we employ the following weight 
function, 
$$w(t;\alpha)=\left(\frac{2\pi^{d/2}\Gamma(1-\alpha/2)}{\alpha 2^\alpha\Gamma[(d+\alpha)/2]}|t|^{d+\alpha}\right)^{-1},$$
for some fixed constant $\alpha\in(0,2)$. Thus our divergence measure is
$$\mathcal D(X,Y;\alpha)=\int_{\mathbb R^d}|\phi_x(t)-\phi_y(t)|^2\!\left(\frac{2\pi^{d/2}\Gamma(1-\alpha/2)}{\alpha 2^\alpha\Gamma[(d+\alpha)/2]}
|t|^{d+\alpha}\right)^{-1}\!\!dt.$$
An alternative divergence measure based on Euclidean distances may be defined as follows 
\begin{equation*}\label{eDist}
\mathcal E(X,Y;\alpha)=2E|X-Y|^\alpha-E|X-X'|^\alpha-E|Y-Y'|^\alpha.
\end{equation*}
In the above equation, $X'$ and $Y'$ are independent copies of $X$ and $Y$, respectively. Then given our choice of weight function, we have the 
following result.
\begin{lem}\label{thm1}
For any pair of independent random variables $X,Y\in\mathbb R^d$ and for any $\alpha\in(0,2)$, if $E(|X|^\alpha+|Y|^\alpha)<\infty$, then 
$\mathcal D(X,Y;\alpha)=\mathcal E(X,Y;\alpha)$, $\mathcal E(X,Y;\alpha)\in[0,\infty)$, and $\mathcal E(X,Y;\alpha)=0$ if and only if $X$ and $Y$ are 
identically distributed.
\end{lem}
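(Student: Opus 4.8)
The plan is to collapse the frequency-domain definition of $\mathcal D$ onto the Euclidean definition of $\mathcal E$ through a single integral identity, and then read off the remaining two claims from the manifest nonnegativity of the integrand. The central fact I would establish first is that, for every $\alpha\in(0,2)$ and every $z\in\mathbb R^d$,
$$\int_{\mathbb R^d}\frac{1-\cos\langle t,z\rangle}{|t|^{d+\alpha}}\,dt=C(d,\alpha)\,|z|^\alpha,\qquad C(d,\alpha)=\frac{2\pi^{d/2}\Gamma(1-\alpha/2)}{\alpha 2^\alpha\Gamma[(d+\alpha)/2]}.$$
The integral converges exactly because $\alpha$ lies strictly between $0$ and $2$: near the origin $1-\cos\langle t,z\rangle=O(|t|^2)$ makes the integrand $O(|t|^{2-d-\alpha})$, integrable when $\alpha<2$, while for large $|t|$ the bounded numerator against $|t|^{-d-\alpha}$ is integrable when $\alpha>0$. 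This is precisely the role of the weight function $w(t;\alpha)=(C(d,\alpha)|t|^{d+\alpha})^{-1}$: its normalizing constant is chosen to cancel $C(d,\alpha)$, so that the identity converts $|t|$-integrals into $\alpha$-th moments.

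Next I would expand the integrand of $\mathcal D$ in terms of cosines. Writing $\phi_x(t)=E\,e^{i\langle t,X\rangle}$ and introducing independent copies $X',Y'$, one obtains $|\phi_x(t)|^2=E\cos\langle t,X-X'\rangle$, $|\phi_y(t)|^2=E\cos\langle t,Y-Y'\rangle$, and $2\,\mathrm{Re}\,\phi_x(t)\overline{\phi_y(t)}=2\,E\cos\langle t,X-Y\rangle$. Since the coefficients $1+1-2$ sum to zero, each $\cos$ may be replaced by $-(1-\cos)$ without altering the value, giving
$$|\phi_x(t)-\phi_y(t)|^2=2\,E\bigl(1-\cos\langle t,X-Y\rangle\bigr)-E\bigl(1-\cos\langle t,X-X'\rangle\bigr)-E\bigl(1-\cos\langle t,Y-Y'\rangle\bigr).$$
Dividing by $C(d,\alpha)|t|^{d+\alpha}$ and integrating, I would treat the three resulting terms separately. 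Each is nonnegative, so Tonelli's theorem justifies interchanging $E$ with the $t$-integral unconditionally; applying the identity of the first step then yields $E|X-Y|^\alpha$, $E|X-X'|^\alpha$, and $E|Y-Y'|^\alpha$, all finite under the hypothesis $E(|X|^\alpha+|Y|^\alpha)<\infty$. Recombining the three finite pieces gives $\mathcal D(X,Y;\alpha)=2E|X-Y|^\alpha-E|X-X'|^\alpha-E|Y-Y'|^\alpha=\mathcal E(X,Y;\alpha)$, which proves the first claim and simultaneously shows $\mathcal E<\infty$.

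For the remaining assertions I would argue from the integral representation just obtained. Since $|\phi_x(t)-\phi_y(t)|^2\ge0$ and $w(t;\alpha)>0$, the integral defining $\mathcal D$, hence $\mathcal E$, is nonnegative, so $\mathcal E\in[0,\infty)$. If $X$ and $Y$ are identically distributed then $\phi_x\equiv\phi_y$ and the integrand vanishes, so $\mathcal E=0$. Conversely, $\mathcal E=0$ forces $\int_{\mathbb R^d}|\phi_x(t)-\phi_y(t)|^2 w(t;\alpha)\,dt=0$ with a strictly positive weight and a continuous, nonnegative integrand, so $\phi_x(t)=\phi_y(t)$ for almost every $t$, and continuity of characteristic functions upgrades this to $\phi_x\equiv\phi_y$; the uniqueness theorem for characteristic functions then gives $F=G$.

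The main obstacle is the integral identity of the first step. Everything downstream is bookkeeping with Tonelli and the uniqueness theorem, but the closed form with its exact Gamma-function constant is what makes the two divergence measures literally equal rather than merely proportional. Verifying it requires reducing the $d$-dimensional integral to a one-dimensional radial integral — using rotational symmetry to replace $\langle t,z\rangle$ by $|z|$ times a single coordinate and rescaling out $|z|^\alpha$ — and then evaluating the resulting one-dimensional integral through a careful Gamma-function computation. This is the only genuinely analytic step; I would either carry it out directly or cite the corresponding computation in \cite{Rizzo:2005}.
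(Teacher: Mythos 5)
Your proposal is correct and is essentially the argument the paper relies on: the paper's proof is a citation to the appendices of \cite{Rizzo:2005} and \cite{James:2012}, where exactly this route is taken --- the radial integral identity $\int_{\mathbb R^d}(1-\cos\langle t,z\rangle)|t|^{-(d+\alpha)}\,dt=C(d,\alpha)|z|^\alpha$ (valid precisely for $\alpha\in(0,2)$), the cosine expansion of $|\phi_x(t)-\phi_y(t)|^2$ via independent copies, termwise Tonelli justified by nonnegativity and the $\alpha$th-moment hypothesis, and the uniqueness theorem for characteristic functions for the equivalence $\mathcal E=0\iff F=G$. No gaps: your convergence analysis at $0$ and $\infty$, the use of the zero-sum coefficients $1+1-2$ to insert $1-\cos$, and the continuity upgrade from almost-everywhere to everywhere equality of the characteristic functions are all exactly as in the cited proof.
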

\begin{proof}
A proof is given in the appendices of \cite{Rizzo:2005} and \cite{James:2012}.
\end{proof}
Thus far we have assumed that $\alpha\in(0,2)$, because in this setting $\mathcal E(X,Y;\alpha)=0$ if and only if $X$ and $Y$ are identically 
distributed. However, if we allow for $\alpha=2$ a weaker result of equality in mean is obtained.
\begin{lem}\label{thm2}
For any pair of independent random variables $X,Y\in\mathbb R^d$, if $E(|X|^2+|Y|^2)<\infty$, then $\mathcal D(X,Y;2)=\mathcal E(X,Y;2)$, $\mathcal 
E(X,Y;2)\in[0,\infty)$, and $\mathcal E(X,Y;2)=0$ if and only if $E X=E Y$.
\end{lem}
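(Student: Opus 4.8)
The plan is to split the statement into two essentially independent parts: a direct moment computation that settles both the range assertion $\mathcal E(X,Y;2)\in[0,\infty)$ and the characterization of when $\mathcal E(X,Y;2)$ vanishes, and a limiting argument built on the previous lemma that yields the identity $\mathcal D(X,Y;2)=\mathcal E(X,Y;2)$.

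First I would evaluate $\mathcal E(X,Y;2)$ in closed form. Expanding $|X-Y|^2=|X|^2-2\langle X,Y\rangle+|Y|^2$ and using independence to write $E\langle X,Y\rangle=\langle EX,EY\rangle$, together with the facts that the independent copies satisfy $EX'=EX$ and $E|X'|^2=E|X|^2$ (and similarly for $Y'$), the three expectations in $\mathcal E(X,Y;2)=2E|X-Y|^2-E|X-X'|^2-E|Y-Y'|^2$ collapse after cancellation to
\[
\mathcal E(X,Y;2)=2\bigl(|EX|^2-2\langle EX,EY\rangle+|EY|^2\bigr)=2|EX-EY|^2.
\]
Every term here is finite because $E(|X|^2+|Y|^2)<\infty$, so this single identity immediately gives $\mathcal E(X,Y;2)\in[0,\infty)$ and shows $\mathcal E(X,Y;2)=0$ precisely when $EX=EY$.

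For the equality $\mathcal D(X,Y;2)=\mathcal E(X,Y;2)$ I would pass to the boundary $\alpha\to2^-$ in Lemma~\ref{thm1}, which asserts $\mathcal D(X,Y;\alpha)=\mathcal E(X,Y;\alpha)$ for every $\alpha\in(0,2)$. On the $\mathcal E$ side the convergence $\mathcal E(X,Y;\alpha)\to2|EX-EY|^2$ follows by dominated convergence, since for $\alpha$ near $2$ each integrand $|X-Y|^\alpha$ is bounded by $1+|X-Y|^2$, which is integrable under the second-moment hypothesis. The main obstacle is the $\mathcal D$ side, where the weight function degenerates at $\alpha=2$: its normalizing constant contains $\Gamma(1-\alpha/2)$, which blows up as $\alpha\to2^-$, while the kernel $|t|^{-(d+\alpha)}$ simultaneously develops a non-integrable singularity at the origin, so the defining integral of $\mathcal D(X,Y;2)$ does not converge as written and must be read as this limit. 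I expect the two effects to balance exactly: expanding the characteristic functions near $0$ gives $\phi_x(t)-\phi_y(t)=i\langle t,EX-EY\rangle+O(|t|^2)$, hence $|\phi_x(t)-\phi_y(t)|^2=\langle t,EX-EY\rangle^2+O(|t|^3)$, so only the quadratic-in-mean term produces a $(2-\alpha)^{-1}$ pole from the origin, which is cancelled by the matching simple zero $\Gamma(1-\alpha/2)^{-1}\sim(2-\alpha)/2$, while the higher-order terms integrate to a finite quantity that is annihilated by the same vanishing factor. The cleanest route, which sidesteps this delicate pole–zero accounting entirely, is to define $\mathcal D(X,Y;2)$ as $\lim_{\alpha\to2^-}\mathcal D(X,Y;\alpha)$; Lemma~\ref{thm1} then forces this limit to coincide with $\lim_{\alpha\to2^-}\mathcal E(X,Y;\alpha)=\mathcal E(X,Y;2)$, so the substantive analytic content reduces to the dominated-convergence step already carried out for $\mathcal E$.
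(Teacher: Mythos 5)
Your argument is correct, and it is more self-contained than what the paper offers: the paper's entire ``proof'' is the citation ``See \cite{Rizzo:2005},'' and in that reference the $\alpha=2$ case rests on essentially your first computation --- expanding $|X-Y|^2=|X|^2-2\langle X,Y\rangle+|Y|^2$ and using independence to obtain $\mathcal E(X,Y;2)=2|EX-EY|^2$, from which finiteness, nonnegativity, and the characterization $\mathcal E(X,Y;2)=0$ if and only if $EX=EY$ all follow at once. Where you go beyond the paper is in confronting what $\mathcal D(X,Y;2)$ even means: you correctly observe that the weight degenerates at $\alpha=2$ (the prefactor $\Gamma(1-\alpha/2)^{-1}\sim(2-\alpha)/2$ vanishes, while $\int|\phi_x(t)-\phi_y(t)|^2|t|^{-(d+2)}\,dt$ diverges logarithmically at the origin whenever $EX\neq EY$, since $|\phi_x(t)-\phi_y(t)|^2=\langle t,EX-EY\rangle^2+O(|t|^3)$ there), so the asserted identity is literally a $0\cdot\infty$ form and must be read as the limit $\alpha\to2^-$. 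Your resolution --- dominate $|X-Y|^\alpha$ by $1+|X-Y|^2$ to get $\mathcal E(X,Y;\alpha)\to\mathcal E(X,Y;2)$ by dominated convergence, then transfer through Lemma~\ref{thm1}, which applies for every $\alpha\in(0,2)$ because $E|X|^\alpha\le 1+E|X|^2<\infty$ --- is clean and legitimately sidesteps the pole--zero bookkeeping (which, if carried out, does cancel to exactly $2|EX-EY|^2$, as your heuristic predicts). The one caveat to record is that your proof establishes the lemma under the limit convention for $\mathcal D(X,Y;2)$; this is the standard reading and the only one under which the statement is well posed when $EX\neq EY$ (note that when $EX=EY$ the literal integral is finite and the zero prefactor gives $\mathcal D(X,Y;2)=0=\mathcal E(X,Y;2)$, consistent with the limit), so flagging it explicitly, as you do, is a feature rather than a gap.
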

\begin{proof}
See \cite{Rizzo:2005}.
\end{proof}

\subsection{A sample divergence for multivariate distributions}\label{sample-energy}
Let $X\sim F$ and $Y\sim G$ for arbitrary distributions $F$ and $G$. Additionally, select $\alpha\in(0,2)$ such that $E|X|^\alpha$, $E|Y|^\alpha<\infty$. 
Let $\X_n=\{X_i: i=1, 2,\dots, n\}$ be $n$ independent observations with $X_i\sim F$, and $\Y_m=\{Y_j: j=1,\dots, m\}$ are 
$m$ independent observations with $Y_j\sim G$. Furthermore, we assume full mutual independence between all observations, $\X_n\perp\!\!\!\perp \Y_m$. 
Then Lemmas \ref{thm1} and \ref{thm2} suggest following sample divergence measure,
\begin{equation}\label{eStat}
\widehat{\mathcal E}(\X_n,\Y_m;\alpha)=\frac{2}{mn}\sum_{i=1}^n\sum_{j=1}^m|X_i-Y_j|^\alpha-\binom{n}{2}^{-1}\!\!\!\!\!\!\!\sum_{1\le i<k\le n}\!\!\!\!\!|X_i-X_k|^\alpha-
\binom{m}{2}^{-1}\!\!\!\!\!\!\!\sum_{1\le j<k\le m}\!\!\!\!\!|Y_j-Y_k|^\alpha.
\end{equation}
By the strong law of large numbers for $U$-statistics \citep{Hoeffding:1961} $\widehat{\cal E}(\X_n,\Y_m;\alpha)\stackrel{a.s.}{\to}{\cal E}(X,Y;\alpha)$ 
as $n\wedge m\to\infty$. Equation~\ref{eStat} allows for an estimate of $\mathcal D(X,Y;\alpha)$ without performing $d$-dimensional 
integration. Furthermore, let
\begin{equation*}
\widehat{\cal Q}(\X_n,\Y_m;\alpha)=\frac{mn}{m+n}\widehat{\cal E}(\X_n,\Y_m;\alpha)	
\end{equation*}
denote the scaled empirical divergence. Under the null hypothesis of equal distributions, i.e., $\mathcal E(X,Y;\alpha)=0$, \cite{Rizzo:2010} show that 
$\widehat{\mathcal Q}(\X_n,\Y_m;\alpha)$ converges in distribution to a non-degenerate random variable $\mathcal Q(X,Y;\alpha)$ as $m\wedge n\to\infty$. 
Specifically,
$$\mathcal Q(X,Y;\alpha)=\sum_{i=1}^\infty\lambda_i Q_i$$
in which the $\lambda_i\ge 0$ are constants that depend on $\alpha$ and the distributions of $X$ and $Y$, and the $Q_i$ are iid chi-squared random variables 
with one degree of freedom. Under the alternative hypothesis of unequal distributions, i.e., $\mathcal E(X,Y;\alpha)>0$, 
$\widehat{\mathcal Q}(\X_n,\Y_m;\alpha)\to\infty$ almost surely as $m\wedge n\to\infty$.\par

\cite{gretton:2007} use a statistic similar to that presented in Equation~\ref{eStat} to test for equality in distribution. This alternative statistic is used by 
\cite{schauer:2010} to test whether the point set inside a new perturbed set of cells had a different distribution to the ``unperturbed'' cells. However, this statistic, unlike the 
one presented in Equation~\ref{eStat}, is only applicable when the unknown distributions have continuous bounded density functions.\par

Using these facts we are able to develop two hierarchical methods for performing change point analysis, which we present in Sections~\ref{e-divisive} and~\ref{e-agglomerative}.

\section{Hierarchical divisive estimation: E-Divisive}\label{e-divisive}
We first present the E-Divisive method for performing hierarchical divisive estimation of multiple change points. Here, multiple change points are estimated 
by iteratively applying a procedure for locating a single change point. At each iteration a new change point location is estimated so that it divides 
an existing segment. As a result, the progression of this method can be diagrammed as a binary tree. In this tree, the root node corresponds to the case 
of no change points, and thus contains the entire time series. All other non-root nodes are either a copy of their parent, or correspond to one of the new 
segments created by the addition of a change point to their parent. Details on the estimation of change point locations can be found in \cite{James:2012}.\par

The statistical significance of an estimated change point is determined through a permutation test, since the distribution of the test statistic depends 
upon the distributions of the observations, which is \emph{unknonwn} in general. Suppose that at the $k$th iteration the current set of change points has segmented the time series in 
the $k$ segments $S_1, S_2,\dots, S_k$, and that we have estimated the next change point location as $\hat\tau_k$, which has an associated test statistic value 
of $q_0$. We then obtain our permuted sample by permuting the observations within each of $S_1,\dots,S_k$. Then, conditional on the previously estimated change point 
locations, we estimate the location of the next change point in our permuted sample, $\hat\tau_{k,r}$, along with its associated testing statistic value $q_r$. Our 
approximate $p$~value is then calculated as $\hat p=\#\{r: q_r\ge q_0\}/(R+1)$, where $R$ is the total number of permutations performed.

The signature of the method used to perform analysis based on this divisive approach is
\begin{center}
	\code{e.divisive(X, sig.lvl = 0.05, R = 199, eps = 1e-3, 
		   half = 1000, k = NULL, min.size = 30, alpha = 1)}
\end{center}

Descriptions for all function arguments can be found in the package's help files. 
The time complexity of this method is $\mathcal{O}(kT^2)$, where $k$ is the number of estimated change points, and $T$ is the number of observations 
in the series. Due to the running time being quadratic in the length of the time series this procedure is not recommended for series larger than several 
thousand observations. A reduction in the required computation time can be achieved by adjusting the \code{eps} and \code{half} function arguments. 
These arguments are used to obtain a permutation $p$~value with a uniformly bounded resampling risk \citep{Gandy:2009}.\par

In the case of independent observations, \cite{James:2012} show that this procedure generates strongly consistent 
change point estimates. There are other faster approaches for performing nonparametric multiple change point analysis, however they do not have 
a similar consistency guarantee, which is why we recommend our divisive approach when appropriate. A more complete outline of the divisive algorithm 
is detailed in the Appendix.\par

\section{Hierarchical agglomerative estimation: E-Agglo}\label{e-agglomerative}
We now present the E-Agglo method for performing hierarchical agglomerative estimation of multiple change points. This method requires that an initial 
segmentation of the data be provided. This initial segmentation can help to reduce the computational time of the procedure. It also allows for the 
inclusion of {\it a priori} knowledge of possible change point locations, however if no such assumptions are made, then each observation can be assigned 
to its own segment. Neighboring segments are then sequentially merged to maximize a goodness-of-fit statistic. The estimated 
change point locations are determined by the iteration which maximized the penalized goodness-of-fit statistic. When using the E-Agglo procedure 
it is assumed that there is at least one change point present within the time series.\par

The goodness-of-fit statistic used in \cite{James:2012} is the between-within distance \citep{Rizzo:2005} among adjacent segments. Let 
$\mathcal C=\{C_1,\dots,C_n\}$ be a segmentation of the $T$ observations into $n$ segments. The goodness-of-fit statistic is defined as 
\begin{equation}\label{gof}
\widehat{\mathcal S}_n(\mathcal C;\alpha)=\sum_{i=1}^{n}\widehat{\cal Q}(C_i,C_{i+1};\alpha).
\end{equation}

Since calculating the true 
maximum of the goodness-of-fit statistic for a given initial segmentation would be too computationally intensive, a greedy algorithm is used to find an approximate 
solution. For a detailed explanation of how this algorithm is efficiently carried out see the Appendix.

If overfitting is a concern, it is possible to penalize the sequence of goodness-of-fit statistics. This is accomplished through the use of the \code{penalty} argument, 
which generates a penalty based upon change point locations. Thus, the change point locations are estimated by maximizing
$$\widetilde{\cal S}_k=\widehat{\cal S}_k+\mbox{\code{penalty}}(\vec\tau(k))$$
where $\vec\tau(k)=\{\tau_1,\tau_2,\dots,\tau_k\}$ is the set of change points associated with the goodness-of-fit statistic $\widehat{\cal S}_k$. Examples of penalty terms include
\begin{center}
\code{penalty1 <- function(cp)\{-length(cp)\}\\
penalty2 <- function(cp)\{mean(diff(sort(cp)))\}\\
}
\end{center}
Here \code{penalty1} corresponds to the function $\mbox{\code{penalty}}(\vec\tau(k))=-k$ while \code{penalty2} corresponds to the function $\displaystyle \mbox{\code{penalty}}(\vec\tau(k))=
\frac{1}{k+1}\sum_{i=1}^{k-1}\left[\tau_{i+1}-\tau_i\right]$. Both penalties favor segmentations with larger sizes. However, \code{penalty1} equally penalizes every additional 
change point, while \code{penalty2} takes the size of the new segments into consideration.

The signature of the method used to perform agglomerative analysis is
\begin{center}
	\code{e.agglo(X, member = 1:nrow(X), alpha = 1, penalty = function(cp)\{0\})\\}
\end{center}
Descriptions for all function arguments can be found in the package's help files. 
Like the E-Divisive method, this is quadratic in the number of observations with computational complexity $\mathcal O(T^2)$, 
however its complexity does not depend on the number of estimated change points.\par

\section{Examples}\label{Examples}
In this section we illustrate the use of both the \code{e.divisive} and \code{e.agglo} functions to perform multivariate change point analysis.

\subsection{Change in univariate normal distribution}\label{uninorm}
We begin with the simple case of identifying change in univariate normal distributions. For this we sequentially generate 100 independent samples from the following 
normal distributions: $\mathcal N(0,1), \mathcal N(0,\sqrt{3}), \mathcal N(2,1),$ and $\mathcal N(2,2)$.\par

\begin{Schunk}
\begin{Sinput}
R> set.seed(250)
R> library("ecp")
R> period1 <- rnorm(100)
R> period2 <- rnorm(100,0,3)
R> period3 <- rnorm(100,2,1)
R> period4 <- rnorm(100,2,4)
R> Xnorm <- matrix(c(period1,period2,period3,period4),ncol=1)
R> output1 <- e.divisive(Xnorm, R = 499, alpha = 1)
R> output2 <- e.divisive(Xnorm, R = 499, alpha = 2)
R> output2$estimates
\end{Sinput}
\begin{Soutput}
[1]   1 201 358 401
\end{Soutput}
\begin{Sinput}
R> output1$k.hat
\end{Sinput}
\begin{Soutput}
[1] 4
\end{Soutput}
\begin{Sinput}
R> output1$order.found
\end{Sinput}
\begin{Soutput}
[1]   1 401 201 308 108
\end{Soutput}
\begin{Sinput}
R> output1$estimates
\end{Sinput}
\begin{Soutput}
[1]   1 108 201 308 401
\end{Soutput}
\begin{Sinput}
R> output1$considered.last
\end{Sinput}
\begin{Soutput}
[1] 358
\end{Soutput}
\begin{Sinput}
R> output1$p.values
\end{Sinput}
\begin{Soutput}
[1] 0.002 0.002 0.010 1.000
\end{Soutput}
\begin{Sinput}
R> output1$permutations
\end{Sinput}
\begin{Soutput}
[1] 499 499 499   5
\end{Soutput}
\begin{Sinput}
R> ts.plot(Xnorm,ylab='Value',main='Change in a Univariate Gaussian Sequence')
R> abline(v=c(101,201,301),col='blue')
R> abline(v=output1$estimates[c(-1,-5)],col='red',lty=2)
\end{Sinput}
\end{Schunk}
As can be seen, if $\alpha=2$ the E-Divisive method can only identify changes in mean. For this reason, it is recommended that $\alpha$ is selected so as to lie 
in the interval $(0,2)$, in general. Figure~\ref{ex1} depicts the example time series, along with the estimated change points from the E-Divisive method when using $\alpha=1$.\par

Furthermore, when applying the E-Agglo method to this same simulated dataset, we obtain similar results. In the \proglang{R} code below we present the case for $\alpha=1$.\par

\begin{Schunk}
\begin{Sinput}
R> library("ecp")
R> member <- rep(1:40,rep(10,40))
R> output <- e.agglo(X = Xnorm, member = member, alpha = 1)
R> output$opt
\end{Sinput}
\begin{Soutput}
[1]   1 101 201 301 401
\end{Soutput}
\begin{Sinput}
R> tail(output$fit,5)
\end{Sinput}
\begin{Soutput}
[1] 100.05695 107.82542 104.30608 102.64330 -17.10722
\end{Soutput}
\begin{Sinput}
R> output$progression[1,1:10]
\end{Sinput}
\begin{Soutput}
[1]  1 11 21 31 41 51 61 71 81 91
\end{Soutput}
\begin{Sinput}
R> output$merged[1:4,]
\end{Sinput}
\begin{Soutput}
     [,1] [,2]
[1,]  -39  -40
[2,]   -1   -2
[3,]  -38    1
[4,]    2   -3
\end{Soutput}
\end{Schunk}

\begin{figure}[!ht]
	\centerline{\includegraphics{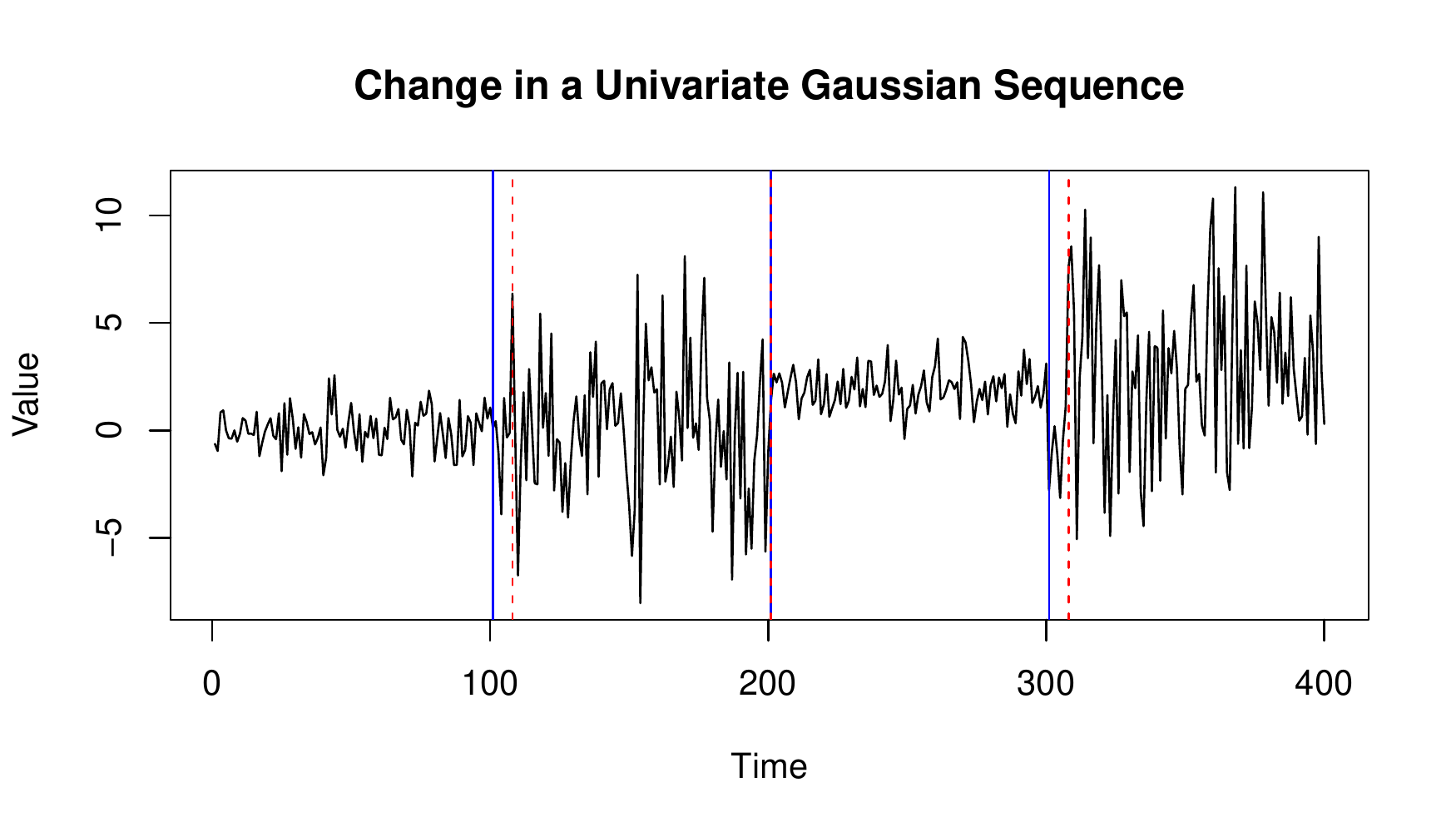}}
	\caption{Simulated independent Gaussian observations with changes in mean or variance. Dashed vertical lines 
indicate the change point locations estimated by the E-Divisive method, when using $\alpha=1$. Solid vertical lines indicate the true change point locations.}
	\label{ex1}
\end{figure}

\subsection{Multivariate change in covariance}\label{multicov}
To demonstrate that our methods do not just identify changes in marginal distributions we consider a multivariate example with only a change in covariance. In this example 
the marginal distributions remain the same, but the joint distribution changes. Therefore, applying a univariate change point procedure to each margin, such as those 
implemented by the \pkg{chagenpoint}, \pkg{cmp}, and \pkg{bcp} packages, will not detect the changes. The observations in this example are drawn from trivariate normal 
distributions with mean vector $\mu=(0,0,0)^\top$ and the following covariance matrices:
$$\begin{pmatrix}
      1&0&0\\
      0&1&0\\
      0&0&1
  \end{pmatrix}, 
\begin{pmatrix}
    1&0.9&0.9\\
      0.9&1&0.9\\
      0.9&0.9&1
\end{pmatrix},
\mbox{ and}
\begin{pmatrix}
    1&0&0\\
	0&1&0\\
    0&0&1
\end{pmatrix}.$$
Observations are generated by using the \pkg{mvtnorm} package \citep{Genz:2012}.\par

\begin{Schunk}
\begin{Sinput}
R> set.seed(200)
R> library("ecp")
R> library("mvtnorm")
R> mu <- rep(0,3)
R> covA <- matrix(c(1,0,0,0,1,0,0,0,1),3,3)
R> covB <- matrix(c(1,0.9,0.9,0.9,1,0.9,0.9,0.9,1),3,3)
R> period1 <- rmvnorm(250, mu, covA)
R> period2 <- rmvnorm(250, mu, covB)
R> period3 <- rmvnorm(250, mu, covA)
R> Xcov <- rbind(period1, period2, period3)
R> DivOutput <- e.divisive(Xcov, R = 499, alpha = 1)
R> DivOutput$estimates
\end{Sinput}
\begin{Soutput}
[1]   1 250 502 751
\end{Soutput}
\begin{Sinput}
R> member <- rep(1:15,rep(50,15))
R> pen = function(x){-length(x)}
R> AggOutput1 <- e.agglo(X = Xcov, member = member, alpha = 1)
R> AggOutput2 <- e.agglo(X = Xcov, member = member, alpha = 1, penalty = pen)
R> AggOutput1$opt
\end{Sinput}
\begin{Soutput}
[1] 1 101 201 301 351 501 601 701 751
\end{Soutput}
\begin{Sinput}
R> AggOutput2$opt
\end{Sinput}
\begin{Soutput}
[1] 301 501
\end{Soutput}
\end{Schunk}
In this case, the default procedure generates too many change points, as can be seen by the result of \code{AggOutput1}. When 
penalizing based upon the number of change points we obtain a much more accurate result, as shown by \code{AggOutput2}. Here the 
E-Agglo method has indicated that observations 1 through 300 and observations 501 through 750 are identically distributed.

\subsection{Multivariate change in tails}\label{tailsEx}
For our second multivariate example we consider the case where the change in distribution is caused by a change in tail behavior. Data points are 
drawn from a bivariate normal distribution and a bivariate t-distribution with 2 degrees of freedom. Figure~\ref{tails} depicts the different samples within the 
time series.\par

\begin{Schunk}
\begin{Sinput}
R> set.seed(100)
R> library("ecp")
R> library("mvtnorm")
R> mu <- rep(0,2)
R> period1 <- rmvnorm(250, mu, diag(2))
R> period2 <- rmvt(250, sigma = diag(2), df = 2)
R> period3 <- rmvnorm(250, mu, diag(2))
R> Xtail <- rbind(period1, period2, period3)
R> output <- e.divisive(Xtail, R = 499, alpha = 1)
R> output$estimates
\end{Sinput}
\begin{Soutput}
[1]   1 257 504 751
\end{Soutput}
\end{Schunk}

\begin{figure}[!ht]
	\centerline{\includegraphics{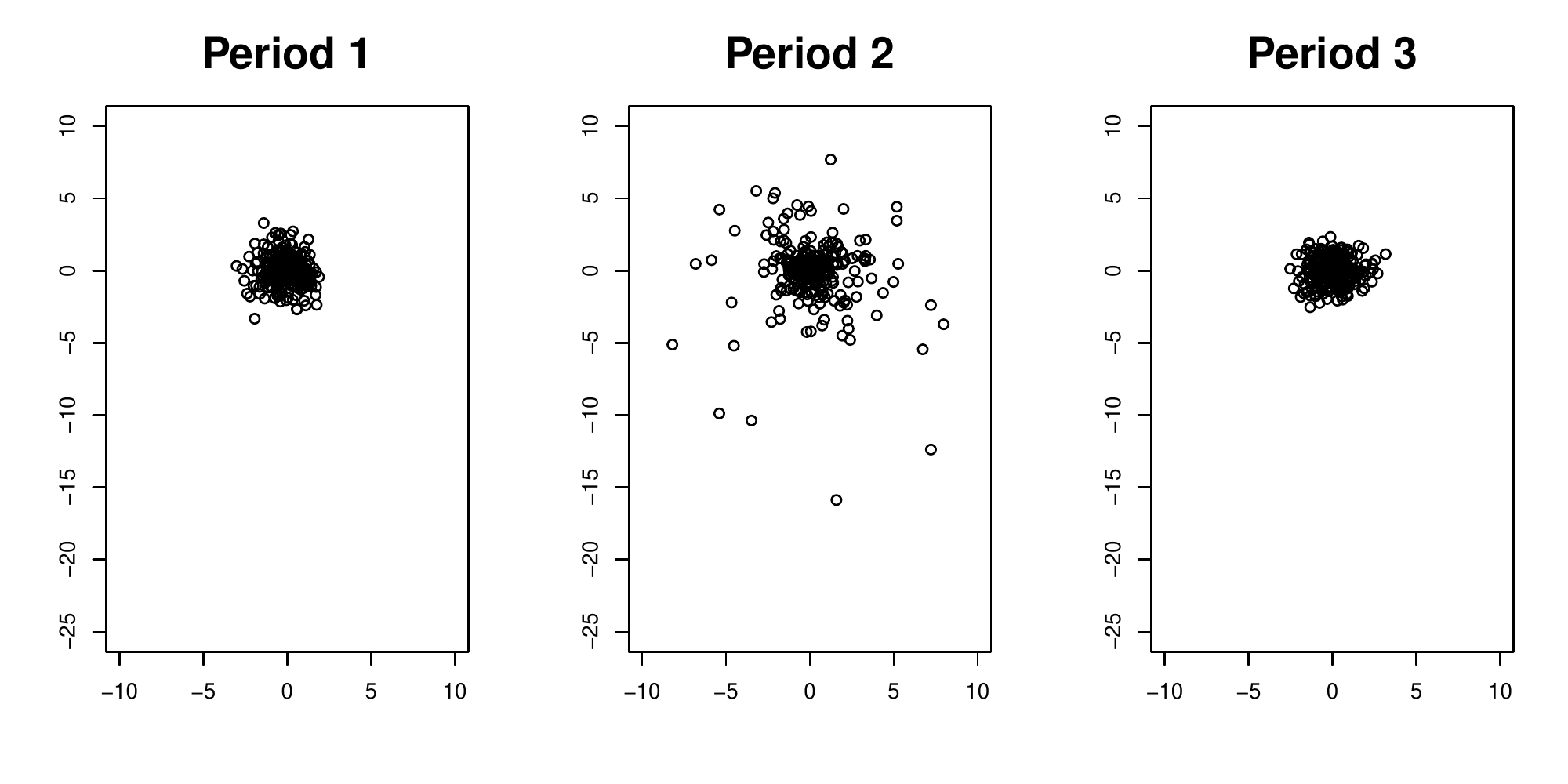}}
	\caption{Data set used for the change in tail behavior example from Section~\ref{tailsEx}. Periods 1 and 3 contain independent bivariate Gaussian observations 
with mean vector $(0,0)^\top$ and identity covariance matrix. The second time period contains independent observations from a bivariate Student's t-distribution 
with 2 degrees of freedom and identity covariance matrix.}
	\label{tails}
\end{figure}

\subsection{Inhomogeneous spatio-temporal point process}\label{stpp}
We apply the E-Agglo procedure to a spatio-temporal point process. The examined dataset consist of 10,498 observations, each with 
associated time and spatial coordinates. This dataset spans the time interval $[0,7]$ and has spatial domain $\mathbb R^2$. It contains 
3 change points, which occur at times $t_1=1, t_2=3,$ and $t_3=4.5$. Over each of these subintervals, $t\in[t_i,t_{i+1}]$ the 
process is an inhomogeneous Poisson point process with intensity function $\lambda(s,t)=f_i(s)$, a 2-d density function, for $i=1,2,3,4$. 
This intensity function is chosen to be 
the density function from a mixture of 3 bivariate normal distributions, 

$${\mathcal N}\left(\begin{pmatrix}-7\\-7\end{pmatrix},\begin{pmatrix}25&0\\0&25\end{pmatrix}\right),
\ \ {\mathcal N}\left(\begin{pmatrix}0\\0\end{pmatrix},\begin{pmatrix}9&0\\0&1\end{pmatrix}\right),
\ \mbox{and}\ \ {\mathcal N}\left(\begin{pmatrix}5.5\\0\end{pmatrix},\begin{pmatrix}9&0.9\\0.9&9\end{pmatrix}\right).$$
\noindent For the time periods, $[0,1], (1,3], (3,4.5],$ and $(4.5,7]$ the respective mixture weights are
$$\left(\frac{1}{3},\frac{1}{3},\frac{1}{3}\right),\ \left(\frac{1}{5},\frac{1}{2},\frac{3}{10}\right),
\ \left(\frac{7}{20},\frac{3}{10},\frac{7}{20}\right),\ \mbox{and}\ \left(\frac{1}{5},\frac{3}{10},\frac{1}{2}\right).$$

To apply the E-Agglo procedure we initially segment the observations into 50 segments such that each segment spans an equal amount 
of time. At its termination, the E-Agglo procedure, with no penalty, identified change points at times 0.998, 3.000, and 4.499. These results can be obtained 
with the following

\begin{Schunk}
\begin{Sinput}
R> library("mvtnorm"); library("combinat"); library("MASS"); library("ecp")
R> set.seed(2013)
R>
R> lambda <- 1500  # This is the overall arrival rate per unit time. 
R> #set of distribution means
R> muA <- c(-7,-7); muB <- c(0,0); muC <- c(5.5,0)
R> #set of distribution covariance matrices
R> covA <- 25*diag(2)
R> covB <- matrix(c(9,0,0,1),2)
R> covC <- matrix(c(9,.9,.9,9),2)
R> #time intervals
R> time.interval <- matrix(c(0,1,3,4.5,1,3,4.5,7),4,2)
R> #mixing coefficents
R> mixing.coef <- rbind(c(1/3,1/3,1/3),c(.2,.5,.3),c(.35,.3,.35),c(.2,.3,.5))
R> 
R> stppData <- NULL
R> for(i in 1:4){
+  count <- rpois(1, lambda* diff(time.interval[i,]))
+  Z <- rmultz2(n = count, p = mixing.coef[i,])
+  S <- rbind(rmvnorm(Z[1],muA,covA), rmvnorm(Z[2],muB,covB),
+	    rmvnorm(Z[3],muC,covC))
+  X <- cbind(rep(i,count), runif(n = count, time.interval[i,1],
+	    time.interval[i,2]), S)
+  stppData <- rbind(stppData, X[order(X[,2]),])
+ }
R> 
R> member <- as.numeric(cut(stppData[,2], breaks = seq(0,7,by=1/12)))
R> output <- e.agglo(X = stppData[,3:4], member = member, alpha = 1)
\end{Sinput}
\end{Schunk}
The E-Agglo procedure was also run on the above data set using the following penalty function,
\begin{itemize}
	\item \code{pen <- function(cp)\{ -length(cp) \} }
\end{itemize}
When using \code{pen}, change points were also estimated at times 0.998, 3.000, 4.499 
The progression of the goodness-of-fit statistic for the different schemes is plotted in Figure~\ref{gofStpp}. A 
comparison of the true densities and the estimated densities obtained from the procedure's results with no penalty are shown in Figures~\ref{trueStpp} 
and ~\ref{estStpp}, respectively. As can be see, the estimated results obtained from the E-Agglo procedure provide a reasonable approximation to the 
true densities.\par

\begin{figure}[!ht]
	\centerline{\includegraphics{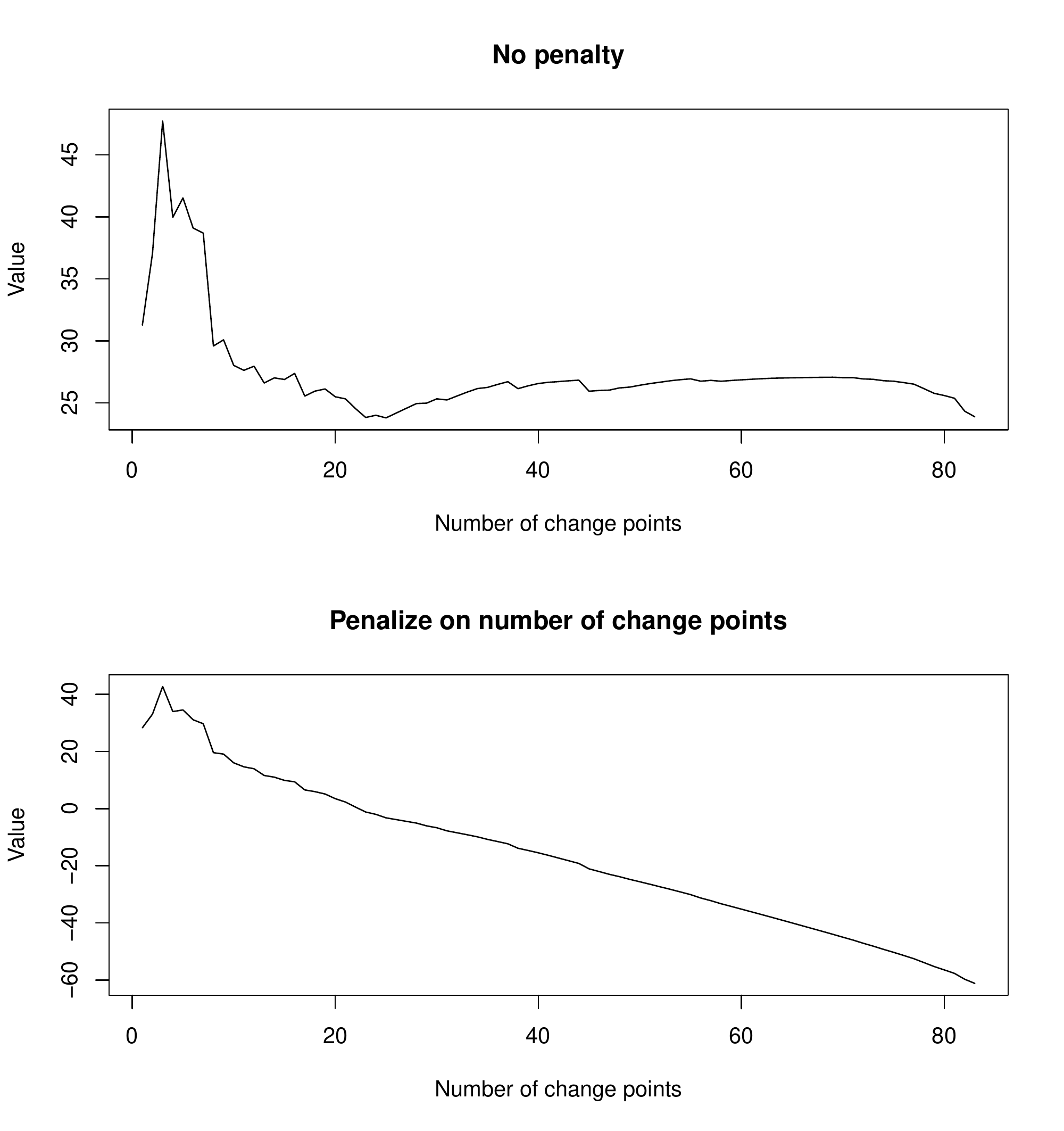}}
	\caption{The progression of the goodness-of-fit statistic for the various penalization schemes discussed in Section~\ref{stpp}.}
	\label{gofStpp}
\end{figure}

\begin{figure}[!ht]
	\centerline{\includegraphics{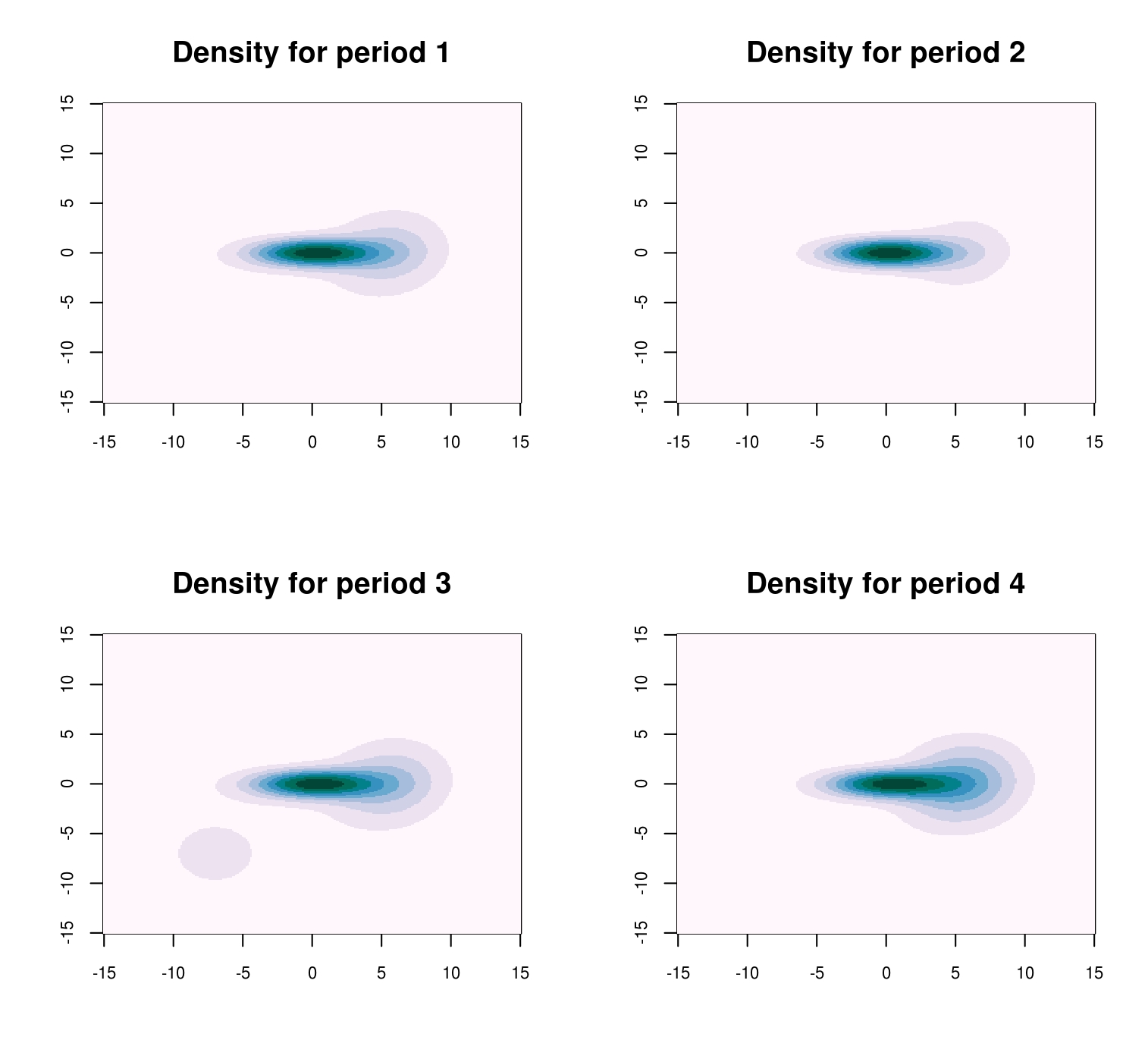}}
	\caption{True density plots for the different segments of the spatio-temporal point process in Section~\ref{stpp}.}
	\label{trueStpp}
\end{figure}

\begin{center}
\begin{figure}[!ht]
	\centerline{\includegraphics{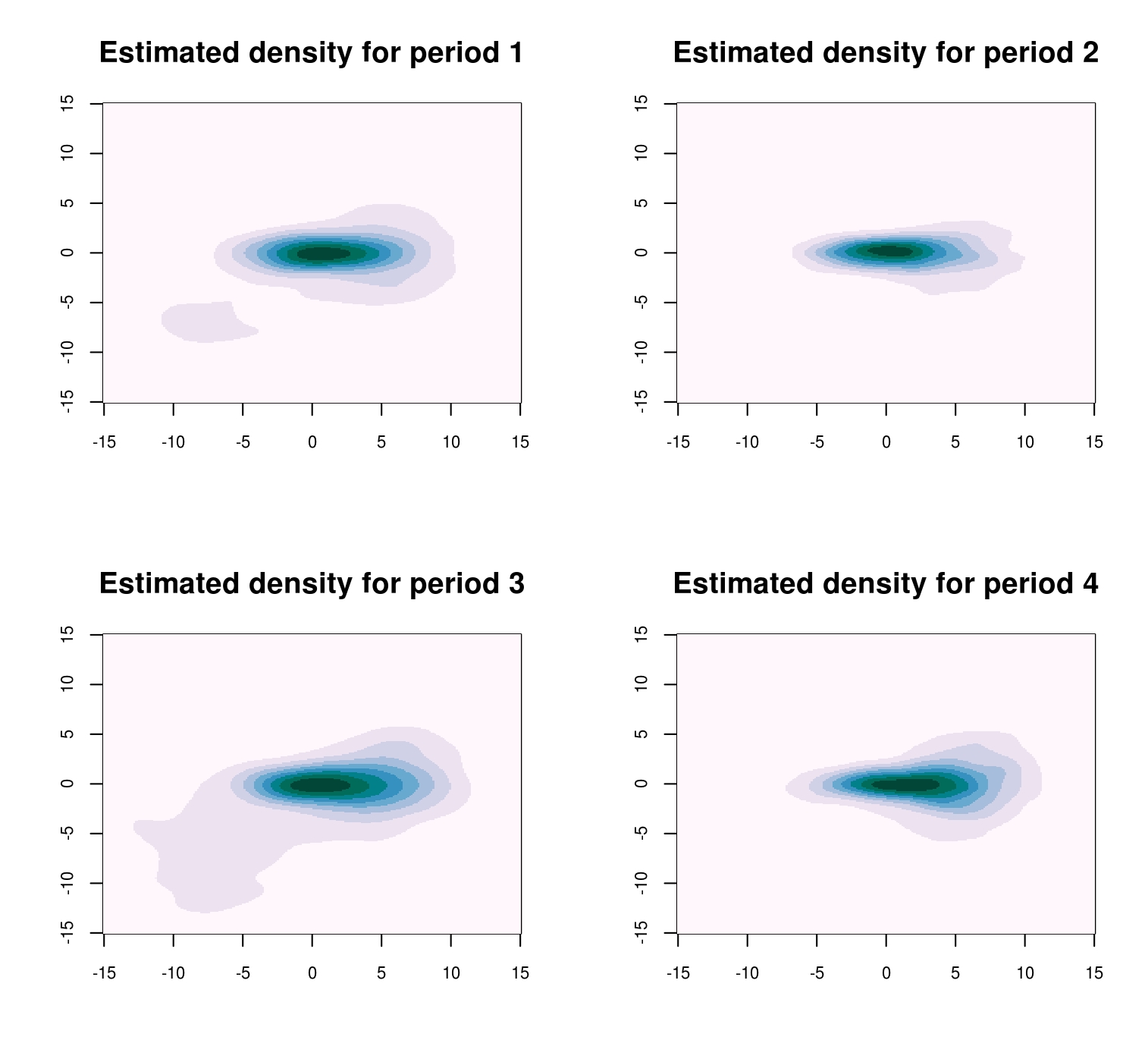}}
	\caption{Estimated density plots for the estimated segmentation provided by the E-Agglo procedure when applied to the spatio-temporal 
point process in Section~\ref{stpp}.}
	\label{estStpp}
\end{figure}
\end{center}

\section{Real data}\label{RealData}

In this section we analyze the results obtained by applying the E-Divisive and E-Agglo methods to two real datasets. 
We first apply our procedures to the micro-array aCGH data from \cite{Vert:2011}. In this dataset we are provided with 
records of the copy-number variations for multiple individuals. Next we examine a set of financial time series. 
For this we consider weekly log returns of the companies which compose the Dow Jones Industrial Average.\par

\subsubsection{Micro-array data}
This dataset consists of micro-array data for 57 different individuals with a bladder tumor. Since all individuals have the same disease, 
we would expect the change point locations to be almost identical on each micro-array set. In this setting, a change point would correspond to a change in copy-number, 
which is assumed to be constant within each segment. The Group Fused Lasso (GFL) approach taken by \cite{Vert:2011} is well suited for this task since 
it is designed to detect changes in mean. We compare the results of our E-Divisive and E-Agglo approaches, when using $\alpha=2$, to those obtained 
by the GFL. In addition, we also consider another nonparametric change point procedure which is able to detect changes in both mean and variability, called 
MultiRank \citep{Fong:2011}.\par

The original dataset from \cite{Vert:2011} contained missing values, and thus our procedure could not be directly applied. Therefore, we removed all 
individuals for which more than 7\% of the values were missing. The remaining missing values we replaced by the average of their neighboring values. 
After performing this cleaning process, we were left with a sample of $d=43$ individuals and size $T=2215$. This dataset can be obtained through the following \proglang{R} commands;
\begin{Schunk}
\begin{Sinput}
R> library("ecp")
R> data("ACGH")
R> acghData = ACGH$data
\end{Sinput}
\end{Schunk}
When applied to the full 43 dimensional series, the GFL procedure estimated 14 change points and the MultiRank procedure estimated 43. When using $\alpha=2,$ the 
E-Divisive procedure estimated 86 change points and the E-Agglo estimated 28.\par

Figures~\ref{person10} and~\ref{person15} provide the results of applying the various methods to a subsample of two individuals 
(persons 10 and 15). The E-Divisive procedure was run with \code{min.size=15}, and \code{R=499}, and the initial segmentation provided to the E-Agglo method 
consisted of equally sized segmens of length 15.
The marginal series are plotted, and the dashed lines are the estimated change point locations.\par

\begin{figure}[h!]
	\centerline{\includegraphics{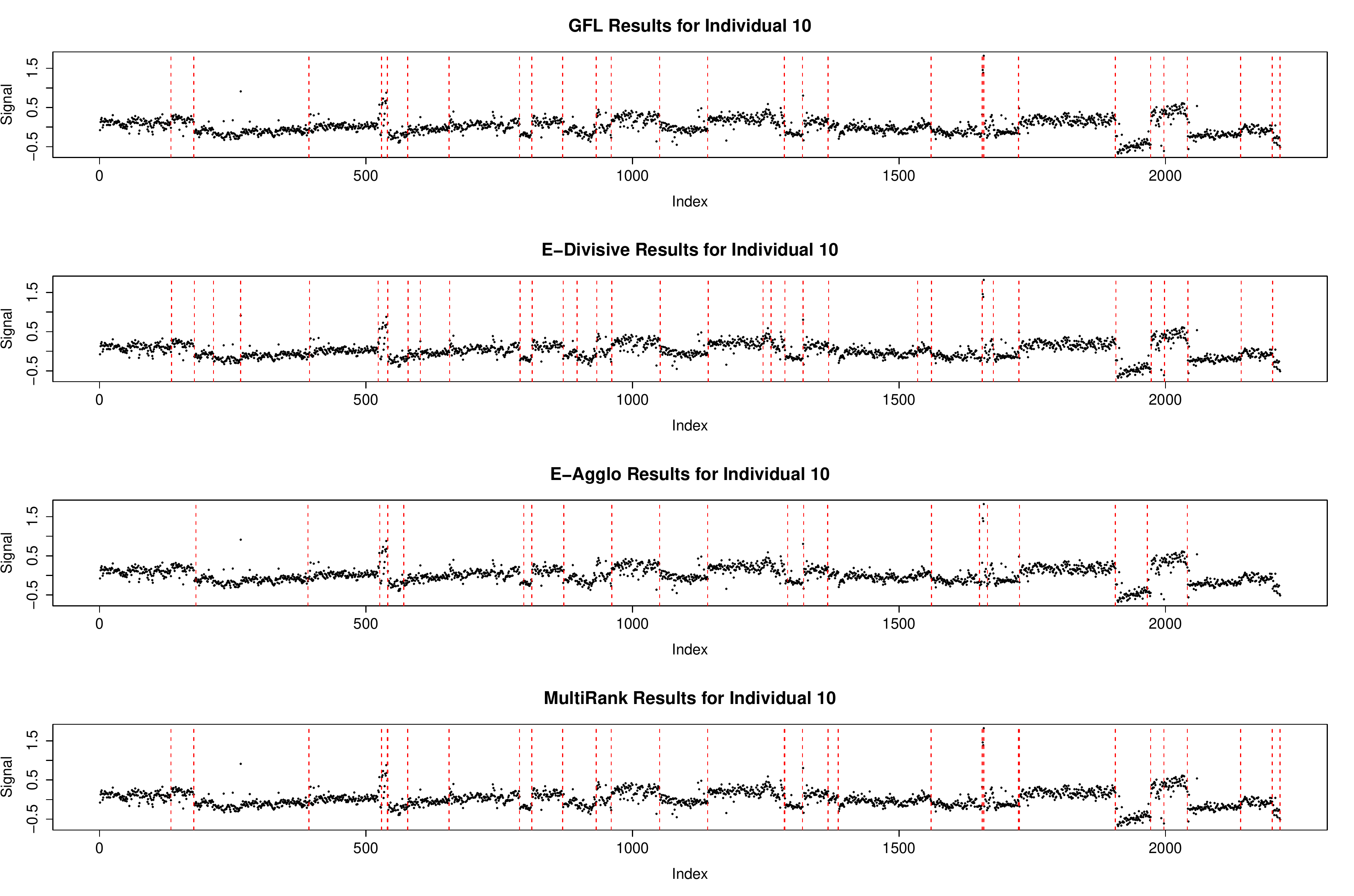}}
	\caption{The aCGH data for individual 10. Estimated change point locations are indicated by dashed vertical lines.}
	\label{person10}
\end{figure}

\begin{figure}[h!]
	\centerline{\includegraphics{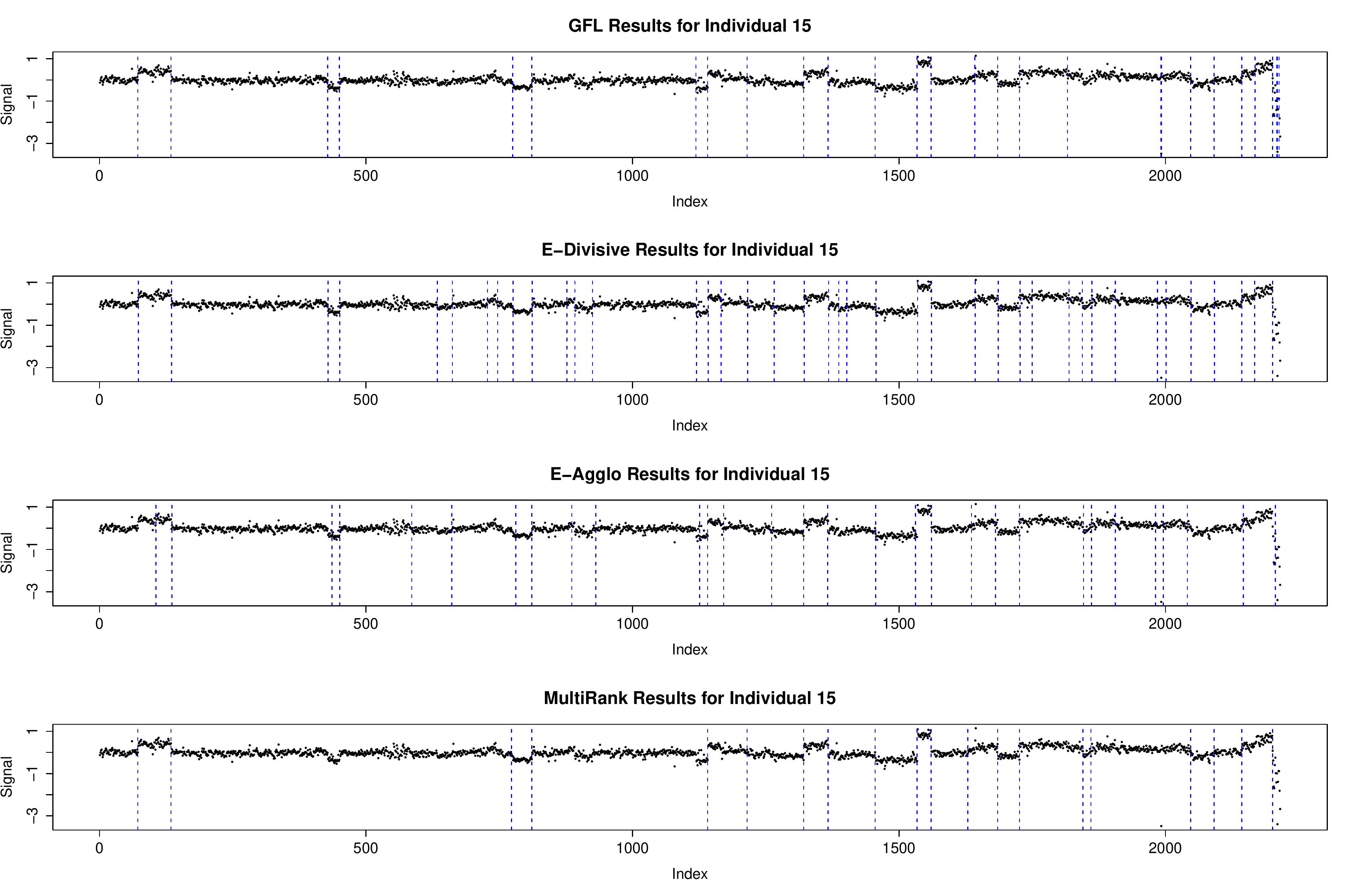}}
	\caption{The aCGH data for individual 15. Estimated change point locations are indicated by dashed vertical lines.}
	\label{person15}
\end{figure}
Looking at the returned estimated change point locations for the full 43-dimensional series we notice that both the E-Divisive and E-Agglo methods identified all of the 
change points returned by the GFL procedure. Further examination also shows that in addition to those change points found by the GFL procedure, the E-Divisive procedure 
also identified changes in the means of the marginal series.  However, if we examine the first 14 to 20 change points estimated by the E-Divisive procedure we observe 
that they are those obtained by the GFL approach. This phenomenon however, does not appear when looking at the results from the E-Agglo procedure. Intuitively this is 
due to the fact that we must provide an initial segmentation of the series, which places stronger limitations on possible change point locaitons, than does specifying 
a minimum segment size.

\subsubsection{Financial data}
Next we consider weekly log returns for the companies which compose the Dow Jones Industrial Average (DJIA). The time period under consideration is 
April 1990 to January 2012, thus providing us with $T=1139$ observations. Since the time series for Kraft Foods Inc.\ does not span this entire period, it is 
not included in our analysis. This dataset is accessible by running \code{data("DJIA")}.\par

When applied to the 29 dimensional series, the E-Divisive method identified change points at 7/13/98, 3/24/03, 9/15/08, and 5/11/09. The change 
points at 5/11/09 and 9/15/08 correspond to the release of the Supervisory Capital Asset Management program results, and the Lehman Brothers bankruptcy 
filing, respectively. If we initially segment the dataset into segments of length 30 and apply the E-Agglo procedure, we identify change points at 
1/3/00, 11/18/02, 8/18/08, and 3/16/09. The change points at 1/3/00 and 3/16/09 correspond to the passing of the Gramm-Leach-Bliley Act and the American 
Recovery and Reinvestment Act respectively.\par

For comparison we also considered the univariate time series for the DJIA Index weekly log returns. In this setting, the E-Divisive method identified change points 
at 10/21/96, 3/31/03, 10/15/07, and 3/9/09. While the E-Agglo method identified change points at 8/18/08 and 3/16/09. Once again, some of these change points correspond 
to major financial events. The change point at 3/9/09 correspond to Moody's rating agency threatening to downgrade Wells Fargo \& Co., JP Morgan Chase \& Co., and 
Bank of America Corp. The 10/15/07 change point is located around the time of the financial meltdown caused by subprime mortgages. In both the univariate and multivariate 
cases the change point in March 2003 is around the time of the 2003 U.S. invasion of Iraq. A plot of the DJIA weekly log returns is provided in Figure~\ref{dowfig} 
along with the locations of the estimated change points by the E-Divisive method.\par

For the E-Divisive method, the set of change points obtained from the univariate and multivariate analysis closely correspond to the same events. However, in the case of the 
E-Agglo method, the multivariate analysis is able to identify significant events that were not able to be detected from the univariate series. For this reason, we would argue 
that regardless of the method being used, it is recommended that multivariate analysis be performed.

\begin{figure}[ht]
	\centerline{\includegraphics{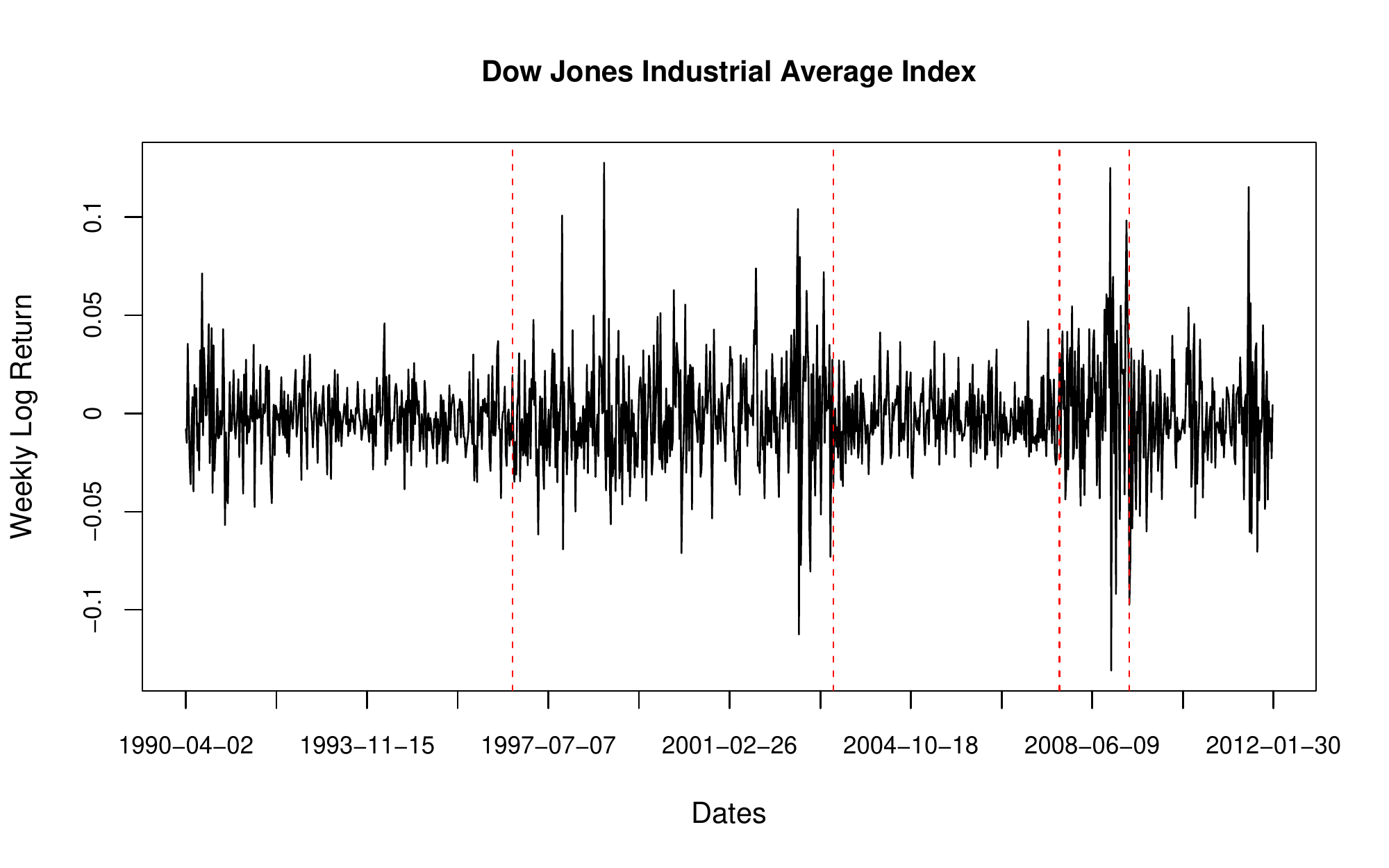}}
	\caption{Weekly log returns for the Dow Jones Industrial Average index from April 1990 to January 2012. The dashed vertical lines indicate the locations of 
estimated change points. The estimated change points are located at 10/21/96, 3/31/03, 10/15/07, and 3/9/09.}
	\label{dowfig}
\end{figure}

\section{Performance analysis}
To compare the performance of different change point methods we used the Rand Index \citep{Rand:1971} as well as Morey and Agresti's Adjusted Rand 
Index \citep{Morey:1984}. These indices provide a measure of similarity between two different segmentations of the same set of observations.\par

The Rand Index evaluates similarity by examining the segment membership of \emph{pairs} of observations. A shortcoming of the Rand Index is that it 
does not measure departure from a given baseline model, thus making it difficult to compare two different 
estimated segmentations. The hypergeometric model is a popular choice for the baseline, and is used by \cite{Hubert:1985} and \cite{Fowlkes:1983}.\par

In our simulation study the Rand and 
Adjusted Rand Indices are determined by comparing the segmentation created by a change point procedure and the true segmentation. We 
compare the performance of our E-Divisive procedure against that of our E-Agglo. The results of the 
simulations are provided in Tables~\ref{sim-all},~\ref{sim-tail} and~\ref{sim-bi-all}. Tables~\ref{sim-all} and~\ref{sim-tail} provide the results for simulations with univariate 
time series, while Table~\ref{sim-bi-all} provides the results for the multivariate time series. In these tables, average Rand Index along with standard 
errors are reported for 1000 simulations. Although not reported, similar results are obtained for the average Adjusted Rand Index.\par

Both the Rand Index and Adjusted Rand Index can be easily obtained through the use of the \code{adjustedRand} function in the \pkg{clues} 
package \citep{Chang:2010}. If \code{U} and \code{V} are membership vectors for two different segmentations of the data, then the required 
index values are obtained as follows,

\begin{Schunk}
\begin{Sinput}
R> library(clues)
R> RAND <- adjustedRand(U,V)
\end{Sinput}
\end{Schunk}
The Rand Index is stored in \code{RAND[1]}, while \code{RAND[2]} and \code{RAND[3]} store various Adjusted Rand indices. These Adjusted 
Rand indices make different assumptions on the baseline model, and thus arrive at different values for the expected Rand index.

\begin{table}[ht!]
\begin{center}
{ \small 
\begin{tabular}{|c|| c| l| l|| c| l| l|}
\hline
  \multicolumn{1}{|c||}{}  & \multicolumn{3}{c||}{{\small\textbf{Change in Mean}}} & \multicolumn{3}{c|}{{\small\textbf{Change in Variance}}}\\
  \hline
  $T$ & $\mu$ & \multicolumn{1}{c|}{E-Divisive} &  \multicolumn{1}{c||}{E-Agglo} & $\sigma^2$ &  \multicolumn{1}{c|}{E-Divisive} & E-Agglo\\
  \noalign{\hrule height 2pt}
\multirow{3}{*}{150}  &   1   & $0.950_{0.001}$ & $0.964_{0.004}$ &           2   & $0.907_{0.003}$ &$0.914_{0.012}$\\
\cline{2-7}          &   2   & $0.992_{4.6\e{-4}}$ & $0.991_{0.001}$  &   5   & $0.973_{0.001}$ &$0.961_{0.002}$ \\
\cline{2-7}          &   4   & $1.000_{3.7\e{-5}}$ & $1.000_{0.000}$ &   10  & $0.987_{7.1\e{-4}}$ &$0.978_{0.002}$ \\
\hline\hline 
\multirow{3}{*}{300}  &   1   & $0.972_{9.1\e{-4}}$   & $0.953_{0.002}$  &   2    & $0.929_{0.003}$&$0.948_{0.021}$ \\
\cline{2-7}          &   2   & $0.996_{2.2\e{-4}}$   & $0.994_{6.4\e{-4}}$  &   5    & $0.990_{5.1\e{-4}}$&$0.976_{0.001}$ \\
\cline{2-7}          &   4   & $1.000_{1.0\e{-5}}$   & $1.000_{0.000}$  &   10   & $0.994_{3.2\e{-4}}$&$0.988_{8.9\e{-4}}$\\
\hline\hline  
\multirow{3}{*}{600}  &   1   & $0.987_{1.5\e{-5}}$ & $0.970_{0.001}$ &   2   & $0.968_{0.001}$ &$ 0.551_{2.3\e{-4}}$ \\
\cline{2-7}          &   2   & $0.998_{3.9\e{-6}}$ & $0.997_{3.0\e{-4}}$ &   5   & $0.995_{2.2\e{-4}}$ &$0.983_{8.7\e{4}}$\\
\cline{2-7}          &   4   & $1.000_{3.1\e{-7}}$ & $1.000_{0.000}$ &   10  & $0.998_{1.5\e{-4}}$ &$0.992_{5.5\e{-4}}$ \\
\hline   
\end{tabular} }
\caption{\label{sim-all} Average Rand Index and standard errors from 1,000 simulations for the E-Divisive and E-Agglo methods. 
Each sample has $T = 150, 300 \;\mathrm{or}\; 600$ observations, consisting of three equally sized clusters, with distributions $N(0,1), G, N(0,1)$, respectively. 
For changes in mean $G \equiv N(\mu,1)$, with $\mu = 1, 2,$ and $4$;
for changes in variance $G \equiv N(0,\sigma^2)$, with $\sigma^2 = 2, 5,$ and $10$.
}
\end{center}
\end{table}
\begin{table}[ht!]
\begin{center}
{ \small 
\begin{tabular}{|c|| c| c| c|}
\hline
  \multicolumn{1}{|c||}{}  & \multicolumn{3}{c|}{{\small\textbf{Change in Tail}}} \\
  \hline
  $T$ & $\nu$ & E-Divisive & E-Agglo\\
  \noalign{\hrule height 2pt}
\multirow{3}{*}{150}  &   16  & $0.835_{0.017}$&$0.544_{6.1\e{-4}}$\\
\cline{2-4}          &   8   & $0.836_{0.020}$&$0.543_{5.9\e{-4}}$  \\
\cline{2-4}          &   2   & $0.841_{0.011}$&$0.545_{7.5\e{-4}}$\\
\hline\hline 
\multirow{3}{*}{300}  &   16  & $0.791_{0.015}$&$0.552_{2.1\e{-4}} $ \\
\cline{2-4}          &   8   & $0.729_{0.018}$&$0.551_{2.2\e{-4}} $\\
\cline{2-4}          &   2   & $0.815_{0.006}$&$ 0.551_{2.3\e{-4}}$\\
\hline\hline  
\multirow{3}{*}{600}  &   16  & $0.735_{0.019}$&$0.552_{2.1\e{-4}} $ \\
\cline{2-4}          &   8   & $0.743_{0.025}$&$ 0.551_{2.2\e{-4}}$\\
\cline{2-4}          &   2   & $0.817_{0.006}$&$ 0.552_{2.3\e{-4}}$\\
\hline   
\end{tabular} }
\caption{\label{sim-tail} Average Rand Index and standard errors from 1,000 simulations for the E-Divisive and E-Agglo methods. 
Each sample has $T = 150, 300 \;\mathrm{or}\; 600$ observations, consisting of three equally sized clusters, with distributions $N(0,1), G, N(0,1)$, respectively. 
For the changes in tail shape $G \equiv t_{\nu}(0,1)$, with $\nu = 16, 8,$ and $2$.
}
\end{center}
\end{table}

\begin{table}[ht!]
\begin{center}
{ \small 
\begin{tabular}{|c|| l| l| l|| c| l| l|}
\hline
  \multicolumn{1}{|c||}{}  & \multicolumn{3}{c||}{{\small\textbf{Change in Mean}}} & \multicolumn{3}{c|}{{\small\textbf{Change in Correlation}}} \\
  \hline
  $T$ & $\mu$ & E-Divisive & E-Agglo& $\rho$ & E-Divisive& E-Agglo\\
  \noalign{\hrule height 2pt}
\multirow{3}{*}{300}  &   1   & $0.987_{4.7\e{-4}}$ & $0.978_{0.001}$ &   0.5   & $0.712_{0.018}$ & $ 0.551_{2.5\e{-4}}$\\
\cline{2-7}                &     2 & $0.992_{8.9\e{-5}}$& $0.999_{2.4\e{4}}$&   0.7   & $0.758_{0.021}$ & $ 0.552_{2.4\e{-4}}$\\
\cline{2-7}                &    3   & $1.000_{1.3\e{-5}}$& $1.000_{0.000}$&   0.9   & $0.769_{0.017}$ & $ 0.550_{3.1\e{-4}}$\\
\hline\hline   
\multirow{3}{*}{600}  &   1   & $0.994_{2.2\e{-4}}$ & $0.986_{8.6\e{4}}$&   0.5   & $0.652_{0.022}$ & $ 0.553_{1.4\e{-4}}$  \\
\cline{2-7}                &     2 & $1.000_{4.3\e{-5}}$ & $0.999_{1.5\e{-4}}$&   0.7   & $0.650_{0.017}$ & $ 0.553_{1.5\e{-4}}$    \\
\cline{2-7}                &    3   & $1.000_{3.3\e{-6}}$ & $1.000_{0.000}$ &   0.9   & $0.806_{0.019}$ & $ 0.553_{1.8\e{-4}}$ \\
\hline\hline   
\multirow{3}{*}{900}  &   1   & $0.996_{1.6\e{-4}}$ & $0.991_{6.0\e{-4}}$&   0.5   & $0.658_{0.024}$ & $ 0.554_{9.9\e{-5}}$\\
\cline{2-7}                &     2 &$1.000_{3.0\e{-5}}$ & $1.000_{7.3\e{-5}}$&   0.7   & $0.633_{0.022}$ & $ 0.554_{1.1\e{-4}}$\\
\cline{2-7}                &    3   & $1.000_{5.2\e{-6}}$ & $1.000_{2.2\e{-5}}$&   0.9   & $0.958_{0.004}$ & $ 0.553_{1.3\e{-4}}$\\
\hline
\end{tabular} }
\caption{\label{sim-bi-all}
Average Rand Index and standard errors from 1,000 simulations for the E-Divisive and E-Agglo methods, when applied to multivariate time series with $d=2$. 
Each sample has $T = 150, 300 \;\mathrm{or}\; 600$ observations, consisting of three equally sized clusters, with distributions $N_2(0,I), G, N_2(0,I)$, respectively. 
For changes in mean $G \equiv N_2(\mu,I)$, with $\mu = (1,1)^\top, (2,2)^\top,$ and $(3,3)^\top$;
for changes in correlation $G \equiv N(0,\Sigma_{\rho})$, in which the diagonal elements of $\Sigma_{\rho}$ are $1$ and the off-diagonal are $\rho$, with $\rho = 0.5, 0.7,$ and $0.9$.
}
\end{center}
\end{table}

\section{Conclusion}\label{conclusion}
The \pkg{ecp} package is able to perform nonparametric change point analysis of multivariate data. The package provides 
two primary methods for performing analysis, each of which is able to determine the number of change points without user input. The only necessary 
user-provided parameter, apart from the data itself, is the choice of $\alpha$. 
If $\alpha$ is selected to lie in the interval $(0,2),$ then the methods provided by this package are able to detect 
\emph{any} type of distributional change within the observed series, provided that the absolute $\alpha$th moments exists.\par

The E-Divisive method sequentially tests the statistical significance of each change point estimate given the previously estimated change locations, 
while the E-Agglo method proceeds by optimizing a goodness-of-fit statistic. For this reason, we prefer to use the E-Divisive method, even though 
its running time is output-sensitive and depends on the number of estimated change points.\par
\newpage
Through the provided examples, applications to real data, and simulations \citep{James:2012}, we observe that the E-Divisive approach 
obtains reasonable estimates for the locations of change points. Currently both the E-Divisive and E-Agglo 
methods have running times that are quadratic relative to the size of the time series. Future version of this package will attempt to reduce this to 
a linear relationship, or provide methods that can be used to quickly provide approximations.

\bibliographystyle{jss}
\nocite{MASSpkg}
\nocite{combinatpkg}
\nocite{Rpkg}
\bibliography{ecp}

\begin{thebibliography}{30}
\newcommand{\enquote}[1]{``#1''}
\providecommand{\natexlab}[1]{#1}
\providecommand{\url}[1]{\texttt{#1}}
\providecommand{\urlprefix}{URL }
\expandafter\ifx\csname urlstyle\endcsname\relax
  \providecommand{\doi}[1]{doi:\discretionary{}{}{}#1}\else
  \providecommand{\doi}{doi:\discretionary{}{}{}\begingroup
  \urlstyle{rm}\Url}\fi
\providecommand{\eprint}[2][]{\url{#2}}

\bibitem[{Akoglu and Faloutsos(2010)}]{Akoglu:2010}
Akoglu L, Faloutsos C (2010).
\newblock \enquote{Event Detection in Time Series of Mobile Communication
  Graphs.}
\newblock In \emph{Proc. of Army Science Conference}. 2010 Army Science
  Conference.

\bibitem[{Bleakley and Vert(2011)}]{Vert:2011}
Bleakley K, Vert JP (2011).
\newblock \enquote{The Group Fused Lasso for Multiple Change-Point Detection.}
\newblock \emph{Technical Report HAL-00602121}, Bioinformatics Center (CBIO).

\bibitem[{Bolton and Hand(2002)}]{Bolton:2002}
Bolton R, Hand D (2002).
\newblock \enquote{Statistical Fraud Detection: A Review.}
\newblock \emph{Statistical Science}, \textbf{17}, 235 -- 255.

\bibitem[{Chang \emph{et~al.}(2010)Chang, Qiu, Zamar, Lazarus, and
  Wang}]{Chang:2010}
Chang F, Qiu W, Zamar RH, Lazarus R, Wang X (2010).
\newblock \enquote{\pkg{clues}: An \proglang{R} Package for Nonparametric
  Clustering Based on Local Shrinking.}
\newblock \emph{Journal of Statistical Software}, \textbf{33}(4), 1--16.

\bibitem[{Chasalow(2012)}]{combinatpkg}
Chasalow S (2012).
\newblock \emph{\pkg{combinat}: Combinatorics Utilities}.
\newblock \proglang{R} package version 0.0-8,
  \urlprefix\url{http://CRAN.R-project.org/package=combinat}.

\bibitem[{Erdman and Emerson(2007)}]{Emerson:2007}
Erdman C, Emerson JW (2007).
\newblock \enquote{\pkg{bcp}: An \proglang{R} package for performing a Bayesian
  analysis of change point problems.}
\newblock \emph{Journal of Statistical Software}, \textbf{23}(3), 1--13.

\bibitem[{Fowlkes and Mallows(1983)}]{Fowlkes:1983}
Fowlkes EB, Mallows CL (1983).
\newblock \enquote{A Method for Comparing Two Hierarchical Clusterings.}
\newblock \emph{Journal of the American Statistical Association},
  \textbf{78}(383), 553 -- 569.

\bibitem[{Gandy(2009)}]{Gandy:2009}
Gandy A (2009).
\newblock \enquote{Sequential Implementation of Monte Carlo Tests With
  Uniformly Bounded Resampling Risk.}
\newblock \emph{Journal of the American Statistical Association},
  \textbf{104}(488), 1504--1511.

\bibitem[{Genz \emph{et~al.}(2012)Genz, Bretz, Miwa, Mi, Leisch, Scheipl, and
  Hothorn}]{Genz:2012}
Genz A, Bretz F, Miwa T, Mi X, Leisch F, Scheipl F, Hothorn T (2012).
\newblock \emph{\pkg{mvtnorm}: Multivariate Normal and T Distributions}.

\bibitem[{Gretton \emph{et~al.}(2007)Gretton, Borgwardt, Rasch, Sch\"{o}lkopf,
  and Smola}]{gretton:2007}
Gretton A, Borgwardt K, Rasch M, Sch\"{o}lkopf B, Smola A (2007).
\newblock \enquote{A Kernel Method for the Two Sample Problem.}
\newblock In \emph{Advances in Neural Information Processing Systems 19}, pp.
  513--520. MIT Press.

\bibitem[{Hoeffding(1961)}]{Hoeffding:1961}
Hoeffding W (1961).
\newblock \enquote{The Strong Law of Large Numbers for {U}-Statistics.}
\newblock \emph{Technical Report 302}, North Carolina State University. Dept.
  of Statistics.

\bibitem[{Hubert and Arabie(1985)}]{Hubert:1985}
Hubert L, Arabie P (1985).
\newblock \enquote{Comparing Partitions.}
\newblock \emph{Journal of Classification}, \textbf{2}(1), 193 -- 218.

\bibitem[{Killick and Eckley(2011)}]{Killick:2011}
Killick R, Eckley IA (2011).
\newblock \enquote{\pkg{changepoint}: An R Package for Changepoint Analysis.}

\bibitem[{Killick \emph{et~al.}(2012)Killick, Fearnhead, and
  Eckley}]{Killick:2012}
Killick R, Fearnhead P, Eckley I (2012).
\newblock \enquote{Optimal Detection of Changepoints With a Linear
  Computational Cost.}
\newblock \emph{Journal of the American Statistical Association},
  \textbf{107}(500), 1590--1598.

\bibitem[{Lung-Yut-Fong \emph{et~al.}(2011)Lung-Yut-Fong, L{\'{e}}vy-Leduc, and
  Capp{\'e}}]{Fong:2011}
Lung-Yut-Fong A, L{\'{e}}vy-Leduc C, Capp{\'e} O (2011).
\newblock \enquote{Homogeneity and Change-Point Detection Tests for
  Multivariate Data Using Rank Statistics.}
\newblock \emph{ArXiv e-prints}.
\newblock \eprint{1107.1971}.

\bibitem[{Mampaey and Vreeken(2011)}]{Mampaey:2011}
Mampaey M, Vreeken J (2011).
\newblock \enquote{Summarizing Categorical Data by Clustering Attributes.}
\newblock \emph{Data Mining and Knowledge Discovery}, \textbf{24}, 1 -- 44.

\bibitem[{{Matteson} and {James}(2013)}]{James:2012}
{Matteson} DS, {James} NA (2013).
\newblock \enquote{A Nonparametric Approach for Multiple Change Point Analysis
  of Multivariate Data.}
\newblock \emph{ArXiv e-prints}.
\newblock To appear in the Journal of the American Statistical Association,
  \eprint{1306.4933}.

\bibitem[{Morey and Agresti(1984)}]{Morey:1984}
Morey LC, Agresti A (1984).
\newblock \enquote{The Measurement of Classification Agreement: An Adjustment
  to the Rand Statistic for Chance Agreement.}
\newblock \emph{Educational and Psychological Measurement}, \textbf{44}, 33 --
  37.

\bibitem[{Muggeo and Adelfio(2011)}]{Muggeo:2011}
Muggeo VM, Adelfio G (2011).
\newblock \enquote{Efficient Change Point Detection for Genomic Sequences of
  Continuous Measurements.}
\newblock \emph{Bioinformatics}, \textbf{27}, 161 -- 166.

\bibitem[{{R Development Core Team}(2012)}]{Rpkg}
{R Development Core Team} (2012).
\newblock \emph{\proglang{R}: A Language and Environment for Statistical
  Computing}.
\newblock \proglang{R} Foundation for Statistical Computing, Vienna, Austria.
\newblock \urlprefix\url{http://www.R-project.org/}.

\bibitem[{Rand(1971)}]{Rand:1971}
Rand WM (1971).
\newblock \enquote{Objective Criteria for the Evaluation of Clustering
  Methods.}
\newblock \emph{Journal of the American Statistical Association}, \textbf{66},
  846 -- 850.

\bibitem[{Ross(2012)}]{Ross:2012}
Ross GJ (2012).
\newblock \emph{\pkg{cpm}: Sequential Parametric and Nonparametric Change
  Detection}.
\newblock \proglang{R} package version 1.0.

\bibitem[{Schauer \emph{et~al.}(2010)Schauer, Duong, Bleakley, Bardin, Bornens,
  and Goud}]{schauer:2010}
Schauer K, Duong T, Bleakley K, Bardin S, Bornens M, Goud B (2010).
\newblock \enquote{Probabilistic Density Maps to Study Global Endomembrane
  Organization.}
\newblock \emph{Nature Methods}, \textbf{7}(7), 560--566.

\bibitem[{Sequeira and Zaki(2002)}]{Sequeira:2002}
Sequeira K, Zaki M (2002).
\newblock \enquote{ADMIT: Anomaly-Based Data Mining for Intrusions.}
\newblock In \emph{Proceedings of the Eighth ACM SIGKDD International
  Conference on Knowledge Discovery and Data Mining}, KDD '02. ACM.

\bibitem[{Sz{\'{e}}kely and Rizzo(2005)}]{Rizzo:2005}
Sz{\'{e}}kely GJ, Rizzo ML (2005).
\newblock \enquote{Hierarchical Clustering Via Joint Between-Within Distances:
  Extending Ward's Minimum Variance Method.}
\newblock \emph{Journal of Classification}, \textbf{22}(2), 151 -- 183.

\bibitem[{Sz{\'{e}}kely and Rizzo(2010)}]{Rizzo:2010}
Sz{\'{e}}kely GJ, Rizzo ML (2010).
\newblock \enquote{Disco Analysis: A Nonparametric Extension of Analysis of
  Variance.}
\newblock \emph{The Annals of Applied Statistics}, \textbf{4}(2), 1034--1055.

\bibitem[{Talih and Hengartner(2005)}]{Talih:2005}
Talih M, Hengartner N (2005).
\newblock \enquote{Structural Learning With Time-Varying Components: Tracking
  the Cross-Section of Financial Time Series.}
\newblock \emph{Journal of the Royal Statistical Society}, \textbf{67}, 321 --
  341.

\bibitem[{Venables and Ripley(2002)}]{MASSpkg}
Venables WN, Ripley BD (2002).
\newblock \emph{Modern Applied Statistics with \proglang{S}}.
\newblock Fourth edition. Springer-Verlag, New York.
\newblock \urlprefix\url{http://www.stats.ox.ac.uk/pub/MASS4}.

\bibitem[{Zeileis \emph{et~al.}(2003)Zeileis, Kleiber, Kr\"amer, and
  Hornik}]{breakpoints}
Zeileis A, Kleiber C, Kr\"amer W, Hornik K (2003).
\newblock \enquote{Testing and Dating of Structural Changes in Practice.}
\newblock \emph{Computational Statistics \& Data Analysis}, \textbf{44},
  109--123.

\bibitem[{Zeileis \emph{et~al.}(2002)Zeileis, Leisch, Hornik, and
  Kleiber}]{strucchangepkg}
Zeileis A, Leisch F, Hornik K, Kleiber C (2002).
\newblock \enquote{strucchange: An R Package for Testing for Structural Change
  in Linear Regression Models.}
\newblock \emph{Journal of Statistical Software}, \textbf{7}(2), 1--38.
\newblock \urlprefix\url{http://www.jstatsoft.org/v07/i02/}.

\end{thebibliography}

\appendix
\section{Appendix}\label{apx}
This appendix provides additional details about the implementation of both the E-Divisive and E-Agglo methods in the \pkg{ecp} 
package.

\subsection{Divisive outline}
The E-Divisive method estimates change points with a bisection approach. In Algorithms~\ref{divOutline} and ~\ref{singleCp}, 
segment $C_i$ contains all observations in time interval $[\ell_i,r_i)$. Algorithm~\ref{singleCp} demonstrates the procedure used to identify 
a single change point. The computational time to maximize over $(\tau,\kappa)$ is reduced to ${\mathcal O}(T^2)$ by using memoization. 
Memoization also allows Algorithm~\ref{singleCp} to execute its for loop at most twice. The permutation test is outlined by Algorithm~\ref{permTest}. 
When given the segmentation $C$, a permutation is only allowed to reorder observations 
so that they remain within their original segments.

\begin{algorithm}
\SetKwInOut{Input}{Inputs}
\SetKwInOut{Output}{Output}
\SetKwFor{For}{for}{}{endfor}
\SetKwFor{While}{while}{}{endwhile}
\caption{Outline of the divisive procedure.}

\Input{Time series $Z$, significance level $p_0$, minimum segment size $m$, the maximum number of permutations for the permutation test $R$, the uniform 
resampling error bound $eps$, epsilon spending rate $h$, and $\alpha\in(0,2]$.}
\Output{A segmentation of the time series.}
\BlankLine
Create distance matrix $Z_{ij}^\alpha=|Z_i-Z_j|^\alpha${\color{white}\;}
\While{Have not found a statisticaly insignificant change point}{
  Estimate next most likely change point location{\color{white}\;}
  Test estimated change point for statistical significance{\color{white}\;}
  \If{Change point is statistically significant}{
    Update the segmentation{\color{white}\;}
  }
}
\Return Final segmentation
\label{divOutline}
\end{algorithm}

\begin{algorithm}
\SetKwInOut{Input}{Inputs}
\SetKwInOut{Output}{Output}
\SetKwFor{For}{for}{}{endfor}
\caption{Outline of procedure to locate a single change point.}

\Input{Segmentation $C$, distance matrix $D$, minimum segment size m.}
\Output{A triple $(x,y,z)$ containing the following information: a segment identifier, a distance within a segment, a weighed sample divergence. }
\BlankLine
best = $-\infty${\color{white}\;}
loc = 0{\color{white}\;}
\For{Segments $C_i\in C$}{
  $A$ = Within distance for $[\ell_i,\ell_i+m)${\color{white}\;}
  \For{$\kappa\in\{\ell_i+m+2,\dots,r_i+1\}$}{
    Calculate and store between and within distances for currenct choice of $\kappa${\color{white}\;}
    Calculate test statistic{\color{white}\;}
    \If{Test statistic $\ge$ best}{
      Update best{\color{white}\;}
      Update loc to m{\color{white}\;}
    }
  }
  \For{$\tau\in\{\ell_i+m+1,\dots,r_i-m\}$}{
    Update within distance for left segment{\color{white}\;}
    \For{$\kappa\in\{\tau+m+1,\dots,r_i+1\}$}{
      Update remaining between and within distances for current choice of $\kappa${\color{white}\;}
      Calcualte test statistic{\color{white}\;}
      \If{Test statistic $\ge$ best}{
	Update best{\color{white}\;}
	Update loc to $\tau${\color{white}\;}
      }
    }
  }
}
\Return Which segment to divide, loc, and best
\label{singleCp}
\end{algorithm}

\begin{algorithm}
\SetKwInOut{Input}{Inputs}
\SetKwInOut{Output}{Output}
\SetKwFor{For}{for}{}{endfor}
\caption{Outline of the permutation test.}

\Input{Distance matrix $D$, observed test statistic $\nu$, maximum number of permutations $R$, uniform resampling error bound $eps$, epsilon spending rate $h$, segmentation $C$, 
minimum segment size m.}
\Output{An approximate $p$~value.}
\BlankLine
over = 1{\color{white}\;}
\For{$i\in\{1,2,\dots,R\}$}{
	Permute rows and columns of $D$ based on the segmentation $C$ to create $D'${\color{white}\;} 
	Obtain test statistic for permuted observations{\color{white}\;}
	\If{Permuted test statistic $\ge$ observed test statistic}{
		over = over + 1{\color{white}\;}
	}
	\If{An early termination condition is satisfied}{
		\Return over/(i+1)
	}
}
\Return over/(R+1)
\label{permTest}
\end{algorithm}

\pagebreak

\subsection{Agglomerative outline}
The E-Agglo method estimates change point by maximizing the goodness-of-fit statistic given by Equation~\ref{gof}. 
The method must be provided an initial segmentation of the series. Segments are then merged in order to maximize the 
goodness-of-fit statistic. As segments are merged, their between-within distances also need to be 
updated. The following result due to \cite{Rizzo:2005} greatly reduces the computational time necessary to perform these updates.
\begin{lem}
	Suppose that $C_1, C_2,$ and $C_3$ are disjoint segments with respective sizes $m_1, m_2,$ and $m_3$. Then if $C_1$ and $C_2$ are merged 
to form the segment $C_1\cup C_2$,
{\small
$$\widehat{\cal E}(C_1\cup C_2,C_3;\alpha)=\frac{m_1+m_3}{m_1+m_2+m_3}\widehat{\cal E}(C_1,C_3;\alpha)+
\frac{m_2+m_3}{m_1+m_2+m_3}\widehat{\cal E}(C_2,C_3;\alpha)-
\frac{m_3}{m_1+m_2+m_3}\widehat{\cal E}(C_1,C_2;\alpha).$$
}
\end{lem}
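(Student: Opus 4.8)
The plan is to prove the identity by directly expanding the sample divergence of Equation~\ref{eStat} and exploiting the elementary additivity of pairwise $\alpha$-distance sums over disjoint sets, after which the whole statement reduces to matching the scalar coefficient of each distinct block of distances on the two sides. For finite sets $A,B$ I would write $W(A,B)=\sum_{a\in A}\sum_{b\in B}|a-b|^\alpha$ for the total between-set distance and $V(A)=\sum_{\{a,a'\}\subseteq A}|a-a'|^\alpha$ for the total within-set distance, so that $\widehat{\cal E}(A,B;\alpha)$ becomes a fixed linear combination of $W(A,B)$, $V(A)$ and $V(B)$ whose coefficients depend only on $|A|$ and $|B|$. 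The only structural facts required are two additivity identities that hold precisely because $C_1,C_2,C_3$ are disjoint: the between-set sum splits, $W(C_1\cup C_2,C_3)=W(C_1,C_3)+W(C_2,C_3)$, and the within-set sum of the union decomposes as $V(C_1\cup C_2)=V(C_1)+V(C_2)+W(C_1,C_2)$, since an unordered pair drawn from $C_1\cup C_2$ either lies in a single block or straddles the two. Everything past this point is bookkeeping of scalars.

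I would then substitute these decompositions into the left-hand side, substitute the definition into each of the three terms on the right-hand side, and read off the coefficient of each of the six aggregate sums $W(C_1,C_3)$, $W(C_2,C_3)$, $W(C_1,C_2)$, $V(C_1)$, $V(C_2)$, $V(C_3)$. The between-block coefficients are the easy part: $W(C_1,C_3)$ appears on the right only inside the first term, $W(C_2,C_3)$ only inside the second, and $W(C_1,C_2)$ only inside the third, so each is matched by a single comparison. The decisive observation is that the weights $\tfrac{m_1+m_3}{M}$, $\tfrac{m_2+m_3}{M}$, $-\tfrac{m_3}{M}$ (with $M=m_1+m_2+m_3$) are the Lance--Williams coefficients associated with the \emph{scaled} between-within divergence $\widehat{\cal Q}=\tfrac{mn}{m+n}\widehat{\cal E}$ that is actually summed in the goodness-of-fit statistic of Equation~\ref{gof}; carrying out the coefficient matching at the level of $\widehat{\cal Q}$ is what makes the size-dependent normalizations cancel cleanly, and the $\tfrac{mn}{m+n}$ factor in each term is exactly what absorbs the mismatched block sizes.

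The step I expect to be the main obstacle is matching the within-block coefficients of $V(C_1)$, $V(C_2)$ and especially $V(C_3)$, because each of these appears in \emph{two} distinct right-hand terms: $V(C_3)$ enters both the $(C_1,C_3)$ and the $(C_2,C_3)$ terms, while $V(C_1)$ enters the $(C_1,C_3)$ term positively and the $(C_1,C_2)$ term through the factor $-\tfrac{m_3}{M}$. One must show that these paired contributions telescope to the single coefficient generated on the left by the decomposition of $V(C_1\cup C_2)$ and of $V(C_3)$. I would verify $V(C_3)$ first, as it is the most symmetric and cleanest, then obtain the $V(C_1)$ and $V(C_2)$ identities by permuting indices, and finally check the cross block $W(C_1,C_2)$, which reaches the left side only through $V(C_1\cup C_2)$ and the right side only through the subtracted $-\tfrac{m_3}{M}\,\widehat{\cal Q}(C_1,C_2)$ term. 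Once all six coefficients agree, the identity follows, since both sides are the same linear combination of the same six sums.
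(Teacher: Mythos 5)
Your structural skeleton is the right one, and it is essentially the only proof this lemma admits: the paper itself offers no argument (it attributes the result to \cite{Rizzo:2005}), and the proof in that source is exactly the coefficient bookkeeping you describe, using the two additivity identities $W(C_1\cup C_2,C_3)=W(C_1,C_3)+W(C_2,C_3)$ and $V(C_1\cup C_2)=V(C_1)+V(C_2)+W(C_1,C_2)$. The genuine gap is your ``decisive observation,'' and it fails at precisely the step you flagged as the main obstacle. First, establishing the recursion for $\widehat{\mathcal Q}$ is not the same as establishing the stated identity for $\widehat{\mathcal E}$: the four terms carry different scale factors ($\tfrac{(m_1+m_2)m_3}{M}$ on the left versus $\tfrac{m_1m_3}{m_1+m_3}$, etc., on the right, with $M=m_1+m_2+m_3$), so the Ward coefficients $\tfrac{m_1+m_3}{M},\tfrac{m_2+m_3}{M},-\tfrac{m_3}{M}$ cannot hold simultaneously for both versions; you silently swap the statement to be proved. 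Second, and more seriously, even at the level of $\widehat{\mathcal Q}$ the within-coefficient matching does not close under the $\binom{n}{2}^{-1}$ (U-statistic) normalization of Equation~\ref{eStat}: the coefficient of $V(C_1)$ is
$$-\frac{2m_3}{M(m_1+m_2-1)} \quad\text{on the left, but}\quad -\frac{2m_1m_3}{M(m_1-1)(m_1+m_2)} \quad\text{on the right,}$$
and these agree only when $m_2=0$; the cross term $W(C_1,C_2)$ fails similarly ($\tfrac{1}{m_1+m_2-1}$ versus $\tfrac{1}{m_1+m_2}$). A concrete counterexample with no degenerate segments: take $\alpha=1$, $C_1=\{0,0\}$, $C_2=\{1,1\}$, $C_3=\{2,2\}$; then $\widehat{\mathcal Q}(C_1\cup C_2,C_3;1)=28/9$ while the right-hand combination equals $10/3$, and the unscaled $\widehat{\mathcal E}$ version fails as well ($7/3$ versus $10/3$).

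The identity is exact only for the V-statistic convention actually used in \cite{Rizzo:2005}: within sums taken over all \emph{ordered} pairs and divided by $n^2$, combined with the $\tfrac{mn}{m+n}$ scaling. Writing $S(A)=2V(A)$ for the ordered within sum, so that the scaled divergence is $e(A,B)=\tfrac{2}{a+b}W(A,B)-\tfrac{b}{a(a+b)}S(A)-\tfrac{a}{b(a+b)}S(B)$ with $a=|A|,b=|B|$, your six-coefficient matching then does close: the between coefficients of $W(C_1,C_3)$ and $W(C_2,C_3)$ are $\tfrac{2}{M}$ on both sides; for $S(C_1)$ the right side telescopes as $-\tfrac{m_3}{m_1M}+\tfrac{m_2m_3}{m_1M(m_1+m_2)}=-\tfrac{m_3}{M(m_1+m_2)}$, matching the left via $S(C_1\cup C_2)=S(C_1)+S(C_2)+2W(C_1,C_2)$; and for $S(C_3)$ both sides give $-\tfrac{m_1+m_2}{m_3M}$. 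So the repair is not to match coefficients ``at the level of $\widehat{\mathcal Q}$'' as you propose, but to interpret $\widehat{\mathcal E}$ in the lemma with the $n^{-2}$ within-normalization of the cited source (which is what the package implements); under the paper's literal Equation~\ref{eStat} definition the lemma is false verbatim, and no choice of level at which you do the bookkeeping will make the $\binom{n}{2}^{-1}$ normalizations cancel.
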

Algorithm~\ref{aggloOutline} is an outline for the agglomerative procedure. In this outline $C_{i+k}\ (C_{i-k})$ is the segment that is $k$ 
segments to the right (left) of $C_i$.

\begin{algorithm}[ht]
\SetKwInOut{Input}{Inputs}
\SetKwInOut{Output}{Output}
\SetKwFor{For}{for}{}{endfor}
\caption{Outline of the agglomerative procedure.}

\Input{An initial segmentation C, a time series $Z$, a penalty function $f(\vec\tau)$, and $\alpha\in(0,2]$.}
\Output{A segmentation of the time series.}
\BlankLine
Create distance matrix $D_{i,j}=\widehat{\mathcal E}(C_i,C_j;\alpha)${\color{white}\;}
Obtain initial penalized goodness-of-fit (gof) statistic{\color{white}\;}
\For{$K\in\{N,N+1,\dots,2N-3\}$}{
	Merge best candidate segments{\color{white}\;}
	Update current gof{\color{white}\;}
	\If{Current gof $\ge$ largest gof so far}{
	  Update largest gof
	  }
}
Penalize the sequence of obtained gof statistics{\color{white}\;}
Choose best segmentation based on penalized gof statistics{\color{white}\;}
\Return Best segmentation
\label{aggloOutline}
\end{algorithm}

\end{document}